\documentclass[preprint,review,12pt]{elsarticle}

\usepackage{upgreek} 
\usepackage{amsmath} 
\usepackage{mathtools} 
\usepackage{amssymb} 
\usepackage{amsfonts} 
\usepackage{dutchcal}
\usepackage{bbold}
\usepackage{siunitx}
\usepackage{subfigure}
\usepackage{algorithm}
\usepackage{booktabs}
\usepackage{algpseudocode}
\usepackage{hyperref}
\usepackage{makecell}

\newtheorem{assumption}{Assumption}
\newtheorem{remark}{Remark}
\newtheorem{definition}{Definition}
\newtheorem{lemma}{Lemma}
\newtheorem{proof}{Proof}
\newtheorem{corollary}{Corollary}

\usepackage{graphicx}

\newcommand{\nint}[1]{\ensuremath\left\lfloor#1\right\rceil}

\newcommand{\ui}[2]{#1_{\text{#2}}}

\newcommand{\pnorm}{n}
\newcommand{\pdegree}{\kappa}

\DeclareMathOperator{\interior}{int}
\newcommand{\mathcalE}{\mathcal{E}} 
\newcommand{\mathcalD}{\mathcal{D}} 

\usepackage{tikz}
\usepackage{pgfkeys}
\usepackage{calc}
\usepackage{pgfplots}
\pgfplotsset{compat=1.17}
\usetikzlibrary{arrows}
\usetikzlibrary{calc}
\usepgflibrary{arrows}
\usetikzlibrary{positioning}
\usetikzlibrary{intersections}
\usetikzlibrary{shapes.geometric}
\usetikzlibrary{decorations.pathreplacing}
\usetikzlibrary{patterns,decorations.pathmorphing,decorations.markings}
\usetikzlibrary{external}
\usetikzlibrary{plotmarks}
\usetikzlibrary{arrows.meta}
\usepgfplotslibrary{patchplots}
\usepackage{grffile}
\usepgfplotslibrary{fillbetween}
\usepackage{xfrac}

\newcommand{\includetikz}[1]{%
	\tikzifexternalizing{%
		\def\DOIT{1}%
	}{%
		\IfFileExists{#1.pdf}{%
			\includegraphics[scale=1]{#1.pdf}%
			\def\DOIT{0}%
		}{%
			\def\DOIT{1}%
		}%
	}%
	\if1\DOIT
	\tikzsetnextfilename{#1}
	\input{#1.tikz}
	\fi
}

\newlength\figureheight
\newlength\figurewidth

\makeatletter
\def\ps@pprintTitle{%
	\let\@oddhead\@empty
	\let\@evenhead\@empty
	\def\@oddfoot{}
	\let\@evenfoot\@oddfoot}
\makeatother

\journal{European Journal of Process Control}

\begin{document}
	\begin{frontmatter}
		\title{Safety and Security: Experimental Validation of~Encrypted Model Predictive Control}
		
		\author[STU]{Juraj Holaza}\ead{juraj.holaza@stuba.sk}
		\author[STU]{Martin Kal\'{u}z}\ead{martin.kaluz@stuba.sk}
		\author[STU]{Mat\'{u}\v{s} Furka}\ead{matus.furka@stuba.sk}
		\author[CTU]{Martin Klau\v{c}o}\ead{martin.klauco@cvut.cz}
		\author[STU]{Juraj~Oravec}\ead{juraj.oravec@stuba.sk}

		\address[STU]{Slovak University of Technology in Bratislava, Faculty of Chemical and Food Technology, Institute of Information Engineering, Automation, and Mathematics, Radlinsk\'{e}ho 9, SK812-37 Bratislava, Slovak Republic}
		\address[CTU]{Department of Control Engineering, Czech Technical University in Prague, Czech~Republic}

		\begin{abstract}
			In this paper, we revisit the problem of an encrypted model predictive control (MPC) design, representing a significant challenge in the recent field of secure process control. Existing methods in secure optimization-based control are non-existent and even partial implementation fails to address the closed-loop system stability and recursive feasibility properties of the constrained MPC. To overcome these limitations, we propose a novel approach that utilizes a polynomial approximation of the optimal control law. This method evaluates the explicit control law within a fully homomorphic encryption framework, ensuring that the controller is securely deployed on any third-party or cloud-based platform, with both process data and controller coefficients protected. Experimental results from a laboratory-scale implementation and validation of the proposed privacy-aware control method demonstrate its advantages.
		\end{abstract}
		
		\begin{keyword}
			closed-loop encrypted control; homomorphic encryption; explicit model predictive control; polynomial approximation
		\end{keyword}
	\end{frontmatter}
	
\section{Introduction}
\label{sec:intro}

Cryptographic security and data privacy plays a vital role in implementing control strategies in modern internet-based cloud services, e.g., see~\cite{teranishi:2024:tcns}, or~\cite{Darup2021} and references therein. Having open, unsecured lines of communication between parts of the closed-loop control systems (sensor, control law evaluator, actuator), considerably restricts the possibility of deploying evaluation of control laws to third-party platforms that just evaluate the control law and provide values for the actuator. Such an arrangement of closed-loop control systems is gaining popularity while it is being demanded by the Internet of Things outreach, which requires everything to be available in the cloud~\cite{7165580}. Unfortunately, the control of industrial processes must remain private not only in the sense of {data transition} but control strategies employed by industries must be carefully guarded since total economic profit heavily depends on the control performance. The control strategy not only affects the profits but is directly responsible for the stability and technological security of the industrial processes itself~\cite{KNAPP201541}. Relevancy of the cybersecurity in the control domain is also emphasized in the recent studies~\cite{lin:2022:ijrnc} or~\cite{chen:2020:ijrnc}.

The secure control approach presented in this paper targets scenarios, where the following challenges are of paramount importance:
\begin{itemize}
	\item \textit{total data privacy}: the platform evaluating the control law will not have access to physical quantities,
	\item \textit{total controller privacy}: the platform evaluating the control law will not have access to the controller coefficient or particular structure,
	\item \textit{close-to-optimal control laws}: the technological constraints are enforced, and
	\item deployment of the control law on \textit{any platform}: either embedded or cloud-based.
\end{itemize}
To address the first two challenges, a fully homomorphic framework is used to evaluate the control law based on encrypted process measurements~\cite{BFV2012}. 
Numerous works in the domain of secure process control have already been published 
considering state feedback~\cite{shen:2021:smcs,stobbe:2022:cdc,BFV_stateControl2024}, polynomial~\cite{Darup2020,MoritzDarup_2025_polyEcryption} or MPC~\cite{Feng2023,EncryptedRobustMPC2026,EncryptedDistributedMPC2024} controllers, decentralized optimization~\cite{Huo2023} or formation control~\cite{Marcantoni2022}. However, the majority of the recent works rely on the use of 
Paillier cryptosystem~\cite{Paillier1999}, which is a partially homomorphic scheme. Then, one needs to compromise by either providing secure control scenarios with \textit{secured data} and a \textit{public controller} or vice versa. None of the previously mentioned approaches, however, solves the problem of implementing a constraint-enabled control with a fully homomorphic encryption framework, see excellent surveys on the progress in the encrypted control in~\cite{SB23,NonlinearControl_Survey_2026}.

Early results on how the challenges $3$ and $4$ of the above list are achieved were published in~\cite{Darup2018}. Authors construct an explicit model predictive control, but since the associated point location problem can not be performed efficiently in a homomorphic fashion, the structure of regions must be either revealed to the non-colluding third-party server or overload the evaluation at the controlled plant side. 
This obstacle was later addressed, e.g., in~\cite{SD20} by introducing a pair of trustworthy servers. 
The convex reformulation of the explicit control law results in a significantly increased number of critical regions to be evaluated leading to the 
overloaded point location problem. 
The implicit (non-explicit) encrypted MPC was designed in~\cite{Alexandru2018}, where either the controlled plant-side or the auxiliary non-colluding 
server assists in evaluating the MPC. 
The privacy-preserving explicit MPC based on the tokenized affine transformations of the control law was designed in~\cite{HS22}. This approach requires a computationally intensive construction of the virtual subregions to the corresponding original critical region of the explicit solution map. 


The paper is organized as follows: The Section~\ref{sec:he:prelim} presents theoretical backgrounds on encrypted controller design and the problem statement. 
Section~\ref{sec:mpc} describes the controller design procedure, including the robust MPC formulation, its explicit representation, and the resulting polynomial approximation. Experimental results obtained on a laboratory setup are presented in Section~\ref{sec:CaseStudy}.
Concluding summary is given in Section~\ref{sec:Conclusion}.
%


\section{Preliminaries and problem statement}
\label{sec:he:prelim}

\subsection{Encrypted Control}
\label{sec:he}

The encryption schemes transform values of manipulated and process variables into plaintexts $p$ that are further encrypted into ciphertexts $c$. Each transformation is formally defined by a separate function, which has its inverse operation, as given in
\begin{subequations}
	\begin{align}
		m &\xrightarrow{f^{\mathcal{E}}(\cdot)} p \xrightarrow{\mathcal{E}(\cdot)} c \label{eq:enc} \, , \\
		c &\xrightarrow{\mathcal{D}(\cdot)} p \xrightarrow{\ui{f}{D}(\cdot)} m \label{eq:dec} \, .
	\end{align}
\end{subequations}
HE algorithms take a special place in cryptography since they provide the possibility of performing mathematical operations over ciphertexts. The property of \textit{homomorphism} ensures that the evaluator of the mathematical operation, i.e., the platform where the control law is evaluated, does not possess any knowledge about manipulated or process variables. The fundamental relation defining the base upon all the HE schemes are built on is defined as
\begin{equation} 
	\mathcal{D}(\mathcal{E}(f^{\mathcal{E}}(m_1)) \odot \mathcal{E}(f^{\mathcal{E}}(m_2))) = m_1 {\boxplus} m_2 \, ,
\end{equation}   
where $m_1$ and $m_2$ are original messages and $\odot$ is mathematical operation over encrypted $m_1$ and $m_2$ resulting in ${\boxplus}$ over original messages. Each HE scheme has predefined its homomorphic properties represented by the pairs of operations $\odot$ and ${\boxplus}$.

In HE schemes we define two mathematical operations that allow us to perform calculations over encrypted messages, specifically, we consider: 
\begin{itemize}
	\item \textit{homomorphic addition} for which has each cryptographic scheme predefined mathematical relations to be carried out over ciphertexts resulting in {addition of} the two or more original messages. Homomorphic addition is considered as a \textit{cheap} operation in terms of computational complexity when computing over ciphertexts.
	\item \textit{homomorphic multiplication}, which in comparison to addition is an \textit{expensive} operation, since the resulting ciphertexts usually increase their size. Thus, there is a requirement for the presence of additional algorithms used to handle the size of the resulting ciphertext. This intuitively leads to an increase in computational complexity.
\end{itemize}

Given the possibility of performing one or both homomorphic operations, we divide HE schemes into:
\begin{itemize}
	\item \textit{partially homomorphic} (PHE) framework enabling to perform homomorphic addition \emph{or} multiplication
	\item \textit{fully homomorphic} (FHE) framework performing both homomorphic addition and homomorphic multiplication
\end{itemize}

\begin{figure}
	\centering
	\includegraphics[width=0.8\textwidth]{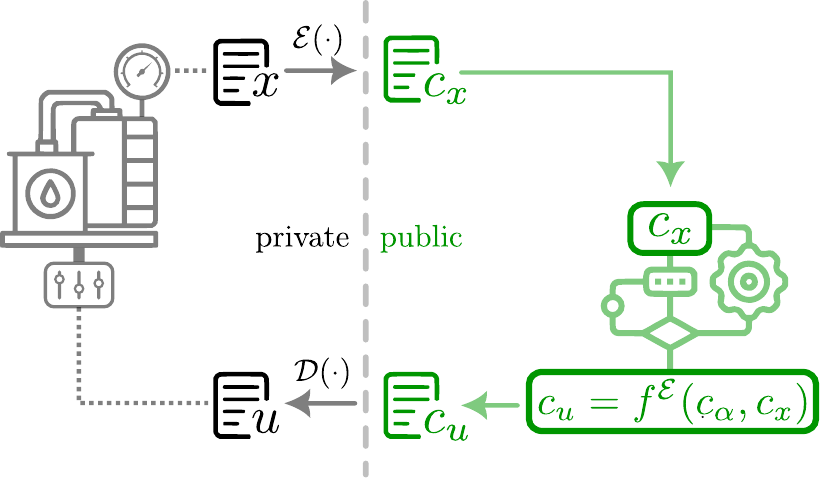}
	\caption{{Schematic representation of the closed-loop control system with fully homomorphic implementation. The gray part denotes the controlled process (\textit{private zone}), while the green part of the picture depicts the cloud-based evaluation of encrypted control law (\textit{public zone}).}}
	\label{fig:general_HE_schemev2}
\end{figure}

In this paper, we consider the implementation of the later-mentioned FHE scheme to handle the evaluation of the polynomial control law in a fully homomorphic way. {By allowing both multiplication and addition operations with encrypted values, we can fully outsource the evaluation of the control law to cloud services. Such an implementation of an FHE-based controller in the closed-loop is depicted in Figure~\ref{fig:general_HE_schemev2}.}

\subsection{Fully Homomorphic Encryption Framework}
\label{sec:hea:bfv}

In this paper we consider the \textit{Brakerski/Fan-Vercauteren} (BFV) cryptosystem~\cite{BFV2012}, a member of the FHE frameworks family.
The BFV scheme is similar to the \textit{Brakerski-Gentry-Vaikuntanathan} (BGV) scheme~\cite{BGV2014}, but employs different noise management techniques~\cite{kim2021}. Both BGV and BFV are based on the hardness of the \textit{Ring Learning With Errors} (RLWE) problem~\cite{RLWE}, which provides strong security guarantees against quantum and classical attacks. In BFV, plaintexts and ciphertexts are represented as polynomials over two distinct rings
\begin{subequations}
	\begin{align}
		R_{\tau} &= \mathbb{Z}_{\tau} [X]/(X^{\mathcal{N}} + 1) , \label{eq:plain:space} \\
		R_\mathcal{q} &= \mathbb{Z}_{\mathcal{q}} [X]/(X^{\mathcal{N}} + 1) , \label{eq:cipher:space}
	\end{align}
\end{subequations}
where, respectively,~\eqref{eq:plain:space} is a space for plaintext polynomials with coefficients smaller than plaintext modulus $\tau$, and~\eqref{eq:cipher:space} represents polynomial space for
each coordinate (polynomial) of the ciphertext with coefficients smaller than coefficient modulus $\mathcal{q}$. Both coefficient moduli ensure that the plaintext and ciphertext operations are performed in the finite fields $\mathbb{Z}_{\tau}$ and $\mathbb{Z}_{\mathcal{q}}$, respectively. The size of rings~\eqref{eq:plain:space} and~\eqref{eq:cipher:space} is defined by polynomial modulus $(X^{\mathcal{N}} + 1)$, a cyclotomic polynomial characterized by a polynomial modulus degree $\mathcal{N}$.

Since the BFV is built on the RLWE problem, the ciphertext is represented by a tuple of two polynomials, i.e., $c=(c_{(1)},c_{(2)})$. Thus the polynomial space for ciphertext is defined as $R_\mathcal{q}^2$, where $2$ stands for the tuple of two elements. BFV provides an evaluation of both homomorphic additions and multiplications with exact results. The cryptographic scheme within the homomorphic properties of BFV is described in detail in~\cite{BFV2012}.

Since BFV is based on integer numbers, we consider a simple quantization as follows
\begin{equation}
	\label{eq:quant}
	\ui{m}{I} = \nint{ m \cdot 10^{\theta} } ,
\end{equation}
where $\ui{m}{I}$ is integer form of real-valued message $m$ regarding the precision degree $\theta$. {In our proposal, we consider encrypted closed-loop as depicted in Fig.~\ref{fig:general_HE_schemev2}. In this figure, the states $x$ and control inputs $u$ are the real-valued messages $m$ encrypted to ciphertexts of states $\ui{c}{x}$ and inputs $\ui{c}{u}$. The ciphertext of designed polynomial controller is noted as $c_{\alpha}$, while the evaluation of the control law over ciphertexts is denoted by function $f^{\mathcal{E}}(\cdot)$. }

\subsection{Problem Statement}
\label{sec:problem_statement}

The main aim of this paper is to design a secure, robust, stable, and easy to implement MPC policy. 
The security is imposed by using the fully encrypted evaluation of control law, which however induces quantization of control inputs and system states, see~\eqref{eq:quant}. 
To address this quantization effect one can design a robust MPC policy. Literature provides various methods such as the min-max-like robust MPC~\cite[Chapter 15]{borrelli_bemporad_morari_2017} or the tube MPC~\cite{Darup2021} designs to name a few. In this paper we apply the first mentioned approach.

In what follows, we point out that the inherited quantization error~\eqref{eq:quant} has to be considered in the controller design procedure. Specifically, we show that it can be formulated as an additive disturbance in the prediction model, leading to a robust MPC design.

\section{Encrypted Model Predictive Controller via Polynomial Approximation}
\label{sec:mpc}

In this section, we show how to approximate the resulting MPC policy by a polynomial feedback law to reduce computational requirements and its memory footprint, hence making the implementation much easier. It should be empathized that all the aforementioned approaches are performed such that the feasibility and stability properties of the original MPC policy are still preserved. The subsequent steps of encrypted approximated MPC design are depicted at Figure~\ref{EncMpcDesing}.

\begin{figure}[h]
	\centering
	\includegraphics[scale=0.5]{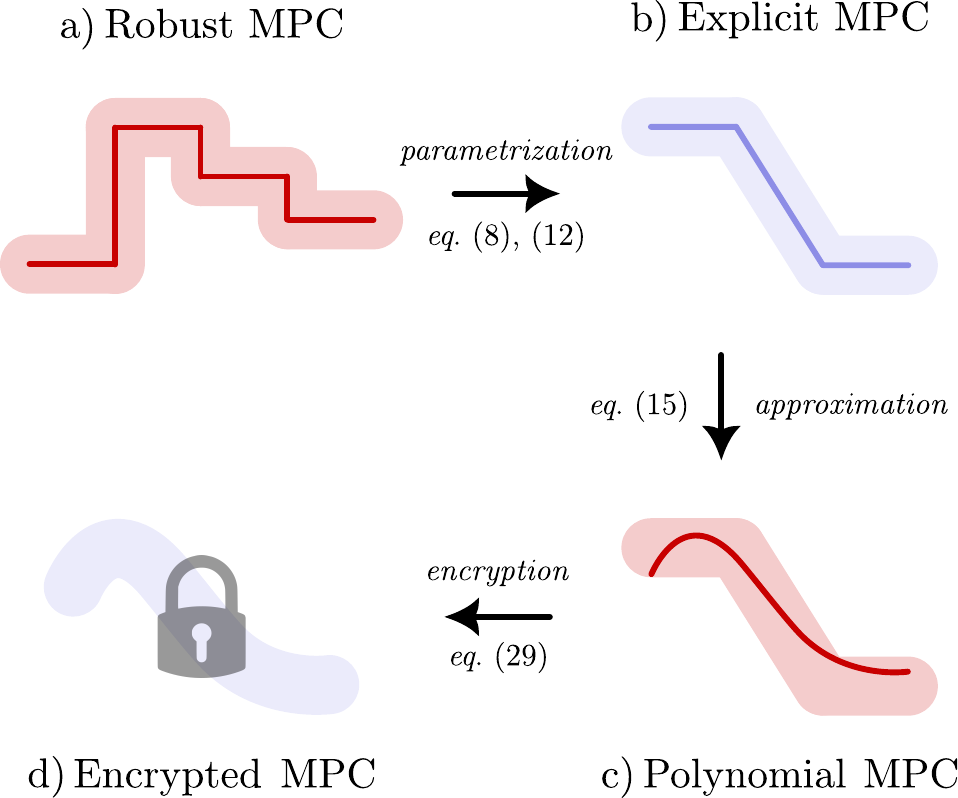}  
	\caption{{Encrypted MPC design flowchart. The designed robust MPC $a$) is parametrized, which results in explicit control law $b$) respecting the constraints and uncertainties. The explicit solution is then approximated by polynomial resulting in $c$). The individual coefficients of this polynomial are encrypted and form the secured form of the approximated controller $d$). Our main contribution is emphasized by red color.}}
	\label{EncMpcDesing}
\end{figure}

\subsection{Robust Encrypted MPC Design}
\label{sec:mpc:theory}

First, for MPC design purposes, we need a reliable mathematical model predicting the future behavior of the system dynamics. Consider a widely-used linear time-invariant (LTI) system in discrete time domain affected by the interval uncertainties $f(x(t),u(t),q_\text{x}(t),q_\text{u}(t),d(t))$ defined as
\begin{equation}
	\label{eq:mpc:theory:sys2}
	f(\cdot) = A (x(t) + q_\text{x}(t)) + B (u(t)+q_\text{u}(t)) + D d(t),
\end{equation}
with states $x(t) \in \mathbb{R}^{n_\text{x}}$, inputs $u(t) \in \mathbb{R}^{n_\text{u}}$, state quantization error $q_\text{x} \in \mathbb{Q}_{\text{x}} \subset \mathbb{R}^{n_\text{x}}$, input quantization error $q_\text{u}(t) \in \mathbb{Q}_{\text{u}} \subset \mathbb{R}^{n_\text{u}}$, and process disturbance $d(t) \in \mathfrak{D} \subset \mathbb{R}^{n_\text{d}}$. We note that both quantization errors $q_\text{x}(t)$ and $q_\text{u}(t)$ are within compact polyhedral sets  $\mathbb{Q}_{\text{x}}$ and $\mathbb{Q}_{\text{u}}$, respectively. Note, the origin of these errors arises from the encoding in~\eqref{eq:quant}, where quantization of real numbers into integers is required. Hence, the prediction model  in~\eqref{eq:mpc:theory:sys2} represents an LTI system under additive uncertainty.

Let us rewrite~\eqref{eq:mpc:theory:sys2} into its equivalent compact form
\begin{equation}
	\label{eq:mpc:theory:sys}
	\widetilde{f}(x(t),u(t),w(t)) = A x(t) + B u(t) + W w(t),
\end{equation}
where $w(t) = \begin{bmatrix}d^\top(t) & q_\text{x}^\top(t) & q_\text{u}^\top(t) \end{bmatrix}^\top \in \mathbb{R}^{n_\text{d}+n_\text{x}+n_\text{u}}$ and $W = \begin{bmatrix}	D & A & B \end{bmatrix}$.
Assume that the system~\eqref{eq:mpc:theory:sys} is subjected to non-empty polyhedral sets
\begin{equation}
	\label{eq:mpc:theory:cons}
	x(t) \in \mathcal{X}, \ u(t) \in \mathcal{U}, \ w(t) \in \mathcal{W},
\end{equation}
where $\mathcal{X} \subseteq \mathbb{R}^{n_\text{x}}$, $\mathcal{U} \subseteq \mathbb{R}^{n_\text{u}}$, and $w(t) \in \mathcal{W} \subset \mathbb{R}^{n_\text{d}+n_\text{x}+n_\text{u}}$.

We define an MPC problem to drive the system~\eqref{eq:mpc:theory:sys} to the origin while fulfilling constraints in~\eqref{eq:mpc:theory:cons} as
\begin{subequations}
	\label{eq:mpc:theory:MPC}
	\begin{align}
		J^\star = \min_{\substack{U_{\text{N}}}} \;& \Vert x_{N} \Vert_{Q_{\mathrm{N}}}^\pnorm + 
		\sum_{k=0}^{N-1} \left( \Vert x_{k} \Vert_{Q_{\mathrm{x}}}^\pnorm + \Vert u_{k} \Vert_{Q_{\mathrm{u}}}^\pnorm \right) \label{eq:mpc:theory:MPC_obj}\\
		\text{s.t.} \ 	&x_{k+1} = \widetilde{f}(x_k,u_k,w), 	\label{eq:mpc:theory:MPC_model}\\
		&x_k \in \mathcal{X}, 	u_k \in \mathcal{U},  w_k \in \mathcal{W},  \label{eq:mpc:theory:MPC_cons}\\
		&x_N \in \mathcal{X}_{\mathrm{N}}, \label{eq:mpc:theory:MPC_terminalcon}\\
		&x_0(t) = x(t), \label{eq:mpc:theory:MPC_initcon}
	\end{align}
\end{subequations}
where~\eqref{eq:mpc:theory:MPC_model}--\eqref{eq:mpc:theory:MPC_cons} are enforced for the entire prediction horizon $N$, i.e., $k = 0, \dots, N-1$. Next, $x_k$, $u_k$, and $w_k$ denote, respectively, the state, input, and disturbance predicted vectors at prediction step $k$ and at discrete time $t$, $\mathcal{X}_{\mathrm{N}} \subseteq \mathcal{X}$ is the polyhedral robust control invariant terminal set with origin in its interior $\mathbb{0} \in \mathcal{X}_{\mathrm{N}}$. 

n the objective function~\eqref{eq:mpc:theory:MPC_obj}, $\Vert x_{N} \Vert_{Q_{\mathrm{N}}}^\pnorm$ is the weighted terminal penalty, $\Vert x_{k} \Vert_{Q_{\mathrm{x}}}^\pnorm$ is the weighted state vectors, and $\Vert u_{k} \Vert_{Q_{\mathrm{u}}}^\pnorm$ is the weighted input vectors where $Q_{\mathrm{x}} \succeq 0$, $Q_{\mathrm{N}} \succeq 0$, $Q_{\mathrm{u}} \succ 0$, and norm $\pnorm \in \{1,\infty \}$ is assumed.\footnote{For an arbitrary $z \in \mathbb{R}^{n_\text{z}}$ we denote norm $n = 1$ as $\Vert z \Vert_{Q_{\mathrm{N}}}^1 =\sum_{i=1}^{n_\text{z}} |z_i|$ and for $n = \infty$ as $\Vert z \Vert_{Q_{\mathrm{N}}}^{\infty} =\max_{i} |z_i|$.} The terminal penalty $Q_{\mathrm{N}}$ is determined as a solution of Riccati equation of the conventional LQR design problem. The resulting optimization problem~\eqref{eq:mpc:theory:MPC} can be formulated as a linear program (LP) the solution of which yields a sequence of optimal robust control actions $U_\text{N} = [u_0^\top, \dots, u_{N-1}^\top]^\top$.

\begin{assumption}
	\label{asm:MPC}
	Assume that
	\begin{itemize}
		\item matrix pair $(A,B)$ in~\eqref{eq:mpc:theory:sys} is controllable,
		\item sets $\mathcal{X}$, $\mathcal{U}$, $\mathcal{W}$ as in~\eqref{eq:mpc:theory:cons} contain the origin in their respective interiors.
	\end{itemize}
\end{assumption}

By solving~\eqref{eq:mpc:theory:MPC} parametrically, i.e., for all feasible $x(t) \in \mathcal{X}$, we obtain the feedback law $\mu: \Omega \mapsto \mathcal{U}$ and the corresponding value function $J^\star: \Omega \mapsto \mathbb{R}_{\geq 0}$ as continuous PWA functions defined over $\Omega = \cup_{i=1}^M \mathcal{R}_i = \{ x(t) \; | \; \exists u : \eqref{eq:mpc:theory:MPC_cons}-\eqref{eq:mpc:theory:MPC_terminalcon} \ \text{holds}\}$, where  $\mathbb{R}_{\geq 0}$ represents non-negative real numbers, see~\cite{borrelli_bemporad_morari_2017}. Specifically, we define the explicit feedback policy as
\begin{equation}
	\label{eq:mpc:theory:eMPC:mu}
	\mu(x(t)) := F_i x(t) + g_i  \ \quad \text{if} \ \quad x(t) \in \mathcal{R}_i,  
\end{equation}
and the value function as
\begin{equation}
	\label{eq:mpc:theory:eMPC:obj}
	J^\star(x(t)) := H_i x(t) + h_i  \ \quad \text{if} \ \quad x(t) \in \mathcal{R}_i,  
\end{equation}
where $i = 1, \dots, M$.
We say that the polyhedral partition $\Omega$ consists of $M$ regions $\mathcal{R}_i$ with $\interior(\mathcal{R}_i) \cap \interior(\mathcal{R}_j) = \varnothing $, $\forall i \ne j$ holds.\footnote{We denote $\interior(\mathcal{R}_i)$ to be interior of the $i$-th region.}
For the ease of notation, we onward omit the time domain $(t)$, i.e., $x = x(t)$.

\begin{remark}[Quadratic penalty]
	Note, that~\eqref{eq:mpc:theory:MPC} with $\pnorm \in \{1,\infty \}$ leads to an LP, which results in~\eqref{eq:mpc:theory:eMPC:obj}. This specific form of the PWA value function is essential for our further results, see selection of a Lyapunov function in Section~\ref{sec:mpc:app}. 
	We would like to point out that one can still formulate~\eqref{eq:mpc:theory:MPC} as a quadratic problem (QP) with $n=2$, or reformulate~\eqref{eq:mpc:theory:MPC} as an associated QP of the tube MPC~\cite{MS05}.
	Then, however, we need to perform an additional procedure to transform the resulting PWQ value function~\eqref{eq:mpc:theory:MPC_obj} into its corresponding PWA approximation, while preserving its stability properties,  see~\cite[Chapter 13.3.2]{borrelli_bemporad_morari_2017} for instance. 
\end{remark}

\begin{remark}[Stability]
	We recall that by following~\cite{MR00}, one can preserve the closed-loop system stability of~\eqref{eq:mpc:theory:MPC} also without the terminal set constraint~\eqref{eq:mpc:theory:MPC_terminalcon}, e.g., by the proper tuning of the terminal penalty $Q_{\mathrm{N}}$  in the cost function~\eqref{eq:mpc:theory:MPC_obj} and/or by tuning the length of the prediction horizon $N$ to ensure inactive terminal set constraint~\eqref{eq:mpc:theory:MPC_terminalcon}, etc.
\end{remark}

\begin{definition}(Encoded LTI system and control law)
	\label{def:encoded_system}~\\
	Finally, let the uncertain LTI system~\eqref{eq:mpc:theory:sys2} be evaluated above the quantization~\eqref{eq:quant} of control inputs $\hat{u} \in \mathcal{U} \oplus \mathbb{Q}_{\mathrm{u}}$ and system states $\hat{x} \in \mathcal{X} \oplus \mathbb{Q}_{\mathrm{x}}$ 
	such that uncertain LTI system~\eqref{eq:mpc:theory:sys2} is equivalent to the following system in the compact form
	\begin{equation}
		\label{eq:mpc:theory:sys:encrypted}
		\hat{f}(\hat{x}(t),\hat{u}(t),d(t)) = A \hat{x}(t) + B \hat{u}(t) + D d(t),
	\end{equation}
	where $\hat{x}(t) = x(t) + q_\text{x}(t)$ and $\hat{u}(t) = u(t)+q_\text{u}(t)$.
	Then, the corresponding control law of explicit MPC  in~\eqref{eq:mpc:theory:eMPC:mu} is evaluated in its quantized form given by
	\begin{equation}
		\label{eq:mpc:theory:eMPC:mu:quantize}
		\hat{\mu}(\hat{x}(t)) := F_i \hat{x}(t) + g_i  \ \quad \text{if} \ \quad \hat{x}(t) \in \mathcal{R}_i,  
	\end{equation}
	for $\forall i = 1, \dots, M$.
\end{definition}

\begin{lemma}
	\label{lem:stability_encoded}
	Consider explicit control law~\eqref{eq:mpc:theory:eMPC:mu} of robust MPC design problem in~\eqref{eq:mpc:theory:MPC} exists, and Assumption~\ref{asm:MPC} holds. Then the corresponding control law in~\eqref{eq:mpc:theory:eMPC:mu:quantize} ensures closed-loop system stability and recursive feasibility subject to the uncertain LTI system in~\eqref{eq:mpc:theory:sys} for all states $x(t) \in \Omega$ w.r.t. any realization of the quantization error $q_\text{x}(t)$, $q_\text{u}(t)$, and the process disturbance $d(t)$ from $\begin{bmatrix}d^\top(t) & q_\text{x}^\top(t) & q_\text{u}^\top(t) \end{bmatrix}^\top \in \mathcal{W}$.
	
\end{lemma}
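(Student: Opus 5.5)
The plan is to reduce the statement to the standard robust MPC guarantees by showing that the quantized closed loop is one realization of the uncertain system~\eqref{eq:mpc:theory:sys}. The robust problem~\eqref{eq:mpc:theory:MPC} was designed against exactly that system, with $w=[d^\top\ q_\text{x}^\top\ q_\text{u}^\top]^\top\in\mathcal{W}$.

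\emph{Step 1: reformulation.} Using Definition~\ref{def:encoded_system}, write $\hat{x}=x+q_\text{x}$ and $\hat{u}=u+q_\text{u}$. Then the encoded dynamics~\eqref{eq:mpc:theory:sys:encrypted} give
\begin{equation*}
A\hat{x}+B\hat{u}+Dd = Ax+Bu+W w = \widetilde{f}(x,u,w), \qquad W=\begin{bmatrix} D & A & B\end{bmatrix}.
\end{equation*}
This identity is purely algebraic. It shows that any admissible realization of the quantization errors and the disturbance is an admissible $w\in\mathcal{W}$ in the prediction model~\eqref{eq:mpc:theory:MPC_model}. Hence every trajectory of the encrypted closed loop is a trajectory of~\eqref{eq:mpc:theory:sys} driven by some $w(t)\in\mathcal{W}$.

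\emph{Step 2: recursive feasibility.} Let $x(t)\in\Omega$. The PWA law~\eqref{eq:mpc:theory:eMPC:mu} is the first element of a robustly feasible sequence $U_\text{N}$, so $\mu(x)\in\mathcal{U}$ and $\widetilde{f}(x,\mu(x),w)\in\mathcal{X}$ for all $w\in\mathcal{W}$. Because the terminal set $\mathcal{X}_\mathrm{N}$ is robust control invariant, the standard shifted-sequence argument applies. The tail of $U_\text{N}$, augmented by an invariance-preserving input in $\mathcal{X}_\mathrm{N}$, is feasible at $t+1$, so $x(t+1)\in\Omega$ for every $w\in\mathcal{W}$. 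Induction over $t$, together with Step 1, yields recursive feasibility of the quantized loop.

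\emph{Step 3: stability.} Take the PWA value function~\eqref{eq:mpc:theory:eMPC:obj} as a Lyapunov candidate. It is nonnegative, and since $Q_\mathrm{x}\succeq0$, $Q_\mathrm{u}\succ0$ and the norms are $1$ or $\infty$, it is bounded below by a positive multiple of $\|x\|$ on $\Omega$. It is also continuous and PWA, hence upper bounded on the compact set $\Omega$. Using the shifted candidate sequence and the terminal penalty chosen from the LQR Riccati solution, which is assumed to decrease along the terminal controller, we aim for the robust decrease
\begin{equation*}
J^\star(x(t+1)) - J^\star(x(t)) \le -\|x(t)\|_{Q_\mathrm{x}}^{\pnorm} - \|\mu(x(t))\|_{Q_\mathrm{u}}^{\pnorm} + \gamma(w)
\end{equation*}
for all $w\in\mathcal{W}$. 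Here $\gamma$ accounts for the nonzero disturbance. This gives robust (input-to-state, or convergence to a robust invariant set around the origin) stability in the sense of~\cite[Chapter 15]{borrelli_bemporad_morari_2017}, and that result is invoked rather than re-derived.

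\emph{Main obstacle.} The delicate point is the mismatch between the argument of the law and the true state. The controller in~\eqref{eq:mpc:theory:eMPC:mu:quantize} evaluates $F_i\hat{x}+g_i$ with the region index chosen from $\hat{x}$, not from $x$. First I would absorb this mismatch into $q_\text{x}$ exactly as in Step 1: the applied input is the law at a perturbed point, and that point is covered by the prediction model's $Aq_\text{x}$ term. For this I need $\hat{x}\in\Omega$. I would try to secure this either by requiring $x\in\Omega\ominus\mathbb{Q}_\text{x}$, or by noting that continuity of $\mu$ makes the region choice irrelevant on region boundaries. I expect this to be where an extra tightening assumption may be needed. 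Without it, the lemma must be read as holding for states with $\hat{x}\in\Omega$.
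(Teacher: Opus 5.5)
Your reduction follows the paper's route: rewrite the encoded loop using $W=[D\ A\ B]$ and invoke the robust MPC theorem the paper cites. Your Steps~2--3 are essentially the paper's first part. There is a genuine gap, however: the one step that distinguishes the quantized loop from the nominal loop is left open, and the fix you sketch does not work.

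\textbf{Applying the guarantee at the true state.} The guarantees in Steps~2--3 concern $x(t+1)=Ax+B\mu(x)+Ww$ with $w\in\mathcal{W}$, i.e.\ with the input equal to the law evaluated at the \emph{true} state. Step~1 only shows that the encrypted loop is a trajectory of \eqref{eq:mpc:theory:sys} driven by \emph{some} input; it does not make that input $\mu(x)$. Write $\hat\mu(\hat x)=\mu(x)+\bigl(\hat\mu(\hat x)-\mu(x)\bigr)$. The discrepancy then enters as $B\bigl(\hat\mu(\hat x)-\mu(x)\bigr)$, which equals $BF_iq_\text{x}$ when $x$ and $\hat x$ lie in the same region. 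This is an input-channel term whose size is set by the controller gains, not by $\mathbb{Q}_\text{x}$. It cannot be absorbed by the $Aq_\text{x}$ block, because that block is already used up by the term coming from $A\hat x=Ax+Aq_\text{x}$.

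\textbf{How the paper closes this step.} In \eqref{prf:quantize_control_input} the paper identifies the whole deviation $\hat\mu(\hat x)-\mu(x)$ with the input error $q_\text{u}$ and takes it in $\mathbb{Q}_\text{u}$. The deviation is thus routed through the $Bq_\text{u}$ block of $W$, and the robust result at the true state $x\in\Omega$ then applies unchanged. The paper asserts this ``by design''. In your write-up you should state it explicitly as a sizing condition on $\mathbb{Q}_\text{u}$. For example, continuity of the PWA law on the convex set $\Omega$ gives $\Vert\mu(x+q_\text{x})-\mu(x)\Vert\le\max_i\Vert F_i\Vert\,\Vert q_\text{x}\Vert$, which tells you how large $\mathbb{Q}_\text{u}$ must be.

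\textbf{Applying the guarantee at $\hat x$.} Your alternative assumes $\hat x\in\Omega$, or tightens to $x\in\Omega\ominus\mathbb{Q}_\text{x}$. This does not close the loop either. At best it yields $x(t+1)\in\Omega$, but the next evaluation happens at $\hat x(t+1)=x(t+1)+q_\text{x}(t+1)$. The error $q_\text{x}(t+1)$ enters the $\hat x$-dynamics through the identity, not through $A$. So in general neither $\hat x(t+1)\in\Omega$ nor $x(t+1)\in\Omega\ominus\mathbb{Q}_\text{x}$ follows from \eqref{eq:mpc:theory:MPC}. You would need robust invariance of the tightened set, which the design does not provide.

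Your remark that $\hat\mu$ is only defined for $\hat x\in\Omega$ is correct, and the paper also assumes it tacitly. As written, though, your proposal ends with a weakened claim that its own steps do not establish.
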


\begin{proof}
	Proof consists of two main parts. First, we prove that the system in~\eqref{eq:mpc:theory:sys} is closed-loop stable and recursive feasibility under the control law~\eqref{eq:mpc:theory:eMPC:mu}. Let Assumption~\ref{asm:MPC} hold. It directly follows from \cite[Theorem~15.9]{borrelli_bemporad_morari_2017}, that if there exists a feasible solution of robust MPC design problem in~\eqref{eq:mpc:theory:MPC}, then control law~\eqref{eq:mpc:theory:eMPC:mu} satisfies the closed-system stability of~\eqref{eq:mpc:theory:sys} and recursive feasibility of optimization problem~\eqref{eq:mpc:theory:MPC} for any feasible $x_{0}$. 
	
	Secondly, it remains to prove that these properties are preserved also w.r.t. quantize control law in~\eqref{eq:mpc:theory:eMPC:mu:quantize} constructed above the feasible set of quantize system states $\hat{x}$. 
	As the original uncertain LTI system~\eqref{eq:mpc:theory:sys2} is equivalent to its expanded form given by
	\begin{eqnarray}
		\label{prf:quantize_system}
		f(\cdot) = A x(t) + B u(t) + \underbrace{ A q_{\mathrm{x}}(t) + B q_{\mathrm{u}}(t) + D d(t) }_{W w(t)} ,
	\end{eqnarray}
	for $w(t) = \begin{bmatrix}d^\top(t) & q_\text{x}^\top(t) & q_\text{u}^\top(t) \end{bmatrix}^\top \in \mathbb{R}^{n_\text{d}+n_\text{x}+n_\text{u}}$ and $W = \begin{bmatrix}	D & A & B \end{bmatrix}$, it is directly resulting in the equivalence of~\eqref{prf:quantize_system} to the uncertain LTI system in~\eqref{eq:mpc:theory:sys}. 
	In terms of the equivalent prediction model~\eqref{eq:mpc:theory:MPC_model}, it follows that any feasible solution of~\eqref{eq:mpc:theory:MPC} simultaneously satisfies the recursive feasibility w.r.t. quantize control inputs and system states in~\eqref{prf:quantize_system}, since \eqref{prf:quantize_system} $\Leftrightarrow$~\eqref{eq:mpc:theory:sys2} holds. 
	This leads to~\eqref{prf:quantize_system} being equivalent to~\eqref{eq:mpc:theory:sys:encrypted}. 
	Consequently, according to Definition~\ref{def:encoded_system}, for any $\Vert q_{\mathrm{x}} \Vert_{\infty} \in \mathbb{Q}_{x}$, the associated quantize system state $\hat{x}(t) = x(t) + q_{\mathrm{x}}(t)$ leads to the quantized control law in~\eqref{eq:mpc:theory:eMPC:mu:quantize}, such that the corresponding worst-case quantize error by design satisfies
	
	\begin{subequations}	
		\label{prf:quantize_control_input}
		\begin{align}
			\Vert \hat{\mu}(\hat{x}(t)) - \mu(x(t)) \Vert_{\infty} = \Vert \hat{u}(t) - u(t) \Vert_{\infty} &= \Vert q_{\mathrm{u}} \Vert_{\infty}, \\
			\Vert q_{\mathrm{u}} \Vert_{\infty} &\in \mathbb{Q}_{u}.
		\end{align}
	\end{subequations}	
	Finally, in terms of~\eqref{prf:quantize_system} constructed subject to any impact of the uncertainty $W w(t)$, including the bounded quantize errors for $\forall q_{\mathrm{u}} \in \mathbb{Q}_{\mathrm{u}}$, $\forall q_{\mathrm{x}} \in \mathbb{Q}_{\mathrm{x}}$, and under the equivalence of \eqref{prf:quantize_system} $\Leftrightarrow$ \eqref{eq:mpc:theory:sys}, it directly follows that for any feasible solution $x(t) \in \Omega$ of~\eqref{eq:mpc:theory:MPC}, the corresponding $\hat{\mu}(\hat{x}(t))$ in~\eqref{eq:mpc:theory:eMPC:mu:quantize} leads to the recursive feasible and stabilizing control action $\hat{u}(t)$. \hfill $\Box$
\end{proof}

\begin{remark}[Non-robust case]
	The benefit of the proposed robust MPC design framework is stability guarantees, and, simultaneously, the increased application range subject to the plant-model mismatch and/or other external disturbances. If one needs to reduce the conservativeness by omitting the robust nature of the MPC formulation, then consider $\mathfrak{D}$ in~\eqref{eq:mpc:theory:sys2}, \eqref{eq:mpc:theory:sys:encrypted} to be an empty set.
\end{remark}

\subsection{Parametric Solution with Polynomials}
\label{sec:mpc:app}

In this paper we adopt results from~\cite{uiam1171} based on which we approximate~\eqref{eq:mpc:theory:eMPC:mu} by a new polynomial control law
\begin{equation}
	\label{eq:mpc:app:tmu}
	\widetilde{\mu}(x) = \alpha_1 x + \alpha_2 x^2 + \dots + \alpha_{\pdegree} x^{\pdegree},
\end{equation}
that stabilizes the system~\eqref{eq:mpc:theory:sys} and provides recursive satisfaction of the constraints~\eqref{eq:mpc:theory:cons}. The coefficients to be find are $\boldsymbol{\alpha} = \{\alpha_i, \dots, \alpha_{\pdegree}\}$, $\pdegree$ is the fixed polynomial degree, and $x^i = [x_1^i, x_2^i, \dots, x_{n_{\mathrm{x}}}^i]^\top$ is the element-wise power of the state vector $x \in \mathbb{R}^{n_{\mathrm{x}}}$.

Let us assume that for the closed-loop system $f_{\text{CL}}(x,w) := \widetilde{f}(x,\mu(x),w)$ there exists a PWA Lyapunov function $L : \Omega \mapsto \mathbb{R}_{\geq 0}$ that satisfies $L(f_{\text{CL}}(x,w)) \leq \gamma L(x)$ and $L(\mathbb{0}^{n_\text{x}}) = 0$ for some $\gamma \in [0,1)$, $\forall x \in \Omega$, and $\forall w \in \mathcal{W}$. \footnote{It was shown by many authors~\cite{4026638,MayEtal:aut:00} that by carefully choosing the optimization problem~\eqref{eq:mpc:theory:MPC} then $\mu(x)$ provides closed-loop system stability and the value function $J^\star (x)$ as in~\eqref{eq:mpc:theory:eMPC:obj} can be used as a Lyapunov function for the $f_{\text{CL}}$, i.e., $ L := J^\star (x)$.}

For the system~\eqref{eq:mpc:theory:sys}, Lyapunov function $L(\cdot)$, and a fixed $\gamma$ we construct stabilizing tube~\cite[Chapter 10.4]{christophersen2007optimal}:
\begin{equation}
	\label{eq:mpc:app:tube}
	\begin{split}
		\mathcal{T}(L,\gamma) \! := \!
		\begin{Bmatrix}
			\left[x^\top, u^\top \right]^\top \, \!\! \vert & \!\!\!\!\!\!\!\!\!\! \, u \in \mathcal{U}, x \in \Omega, \widetilde{f}(x,u,w) \in \Omega, \\
			& \!\!\!\!\!\! L(\widetilde{f}(x,u,w)) \leq \gamma L(x), \forall w \in \mathcal{W}
		\end{Bmatrix} \! .
	\end{split}
\end{equation}

As $\text{dom}(L) = \Omega = \cup_{i=1}^M \mathcal{R}_i$, we have that $\mathcal{T}(L,\gamma) := \cup_{i=1}^{M} \mathcal{T}_i$ with $\mathcal{T}_i := \cup_{j=1}^{M} \mathcal{T}_{i,j}$ for all $i = 1, \dots, M$. Respectively, the entire stability tube is defined as union of polytopes $\mathcal{T}(L,\gamma) := \cup_{i=1}^{M} \cup_{j=1}^{M} \mathcal{T}_{i,j}$ where each $\mathcal{T}_{i,j} \in \mathbb{R}^{n_x+n_u}$ represents a feasible transition from $\mathcal{R}_i$ to $\mathcal{R}_j$ for which the value of $L(\cdot)$ decreases.

Let us define each region $\mathcal{R}_i$ of $\Omega$ in its vertex representation
\begin{align}
	\label{eq:mpc:app:region_vertex}
	\mathcal{R}_i &= \begin{Bmatrix} x \, | \, x = \sum_{j=1}^{|\nu_i|} \lambda_j |\nu_i|_j, \lambda \in \Lambda_i \end{Bmatrix},\\
	\Lambda_i &= \begin{Bmatrix} \lambda \, | \,  0 \leq \lambda_j \leq 1, \sum_{j=1}^{|\nu_i|}\lambda_j=1 \end{Bmatrix},
\end{align}
where $\nu_i$ are vertices of $\mathcal{R}_i$, $|\nu_i|_j$ is the $j$-th vertex of $\mathcal{R}_i$, and $|\cdot|$ denotes cardinality.
Assume that for a given $V(\cdot)$ and a fixed $\gamma \in [0,1)$ there exists a full-dimensional stability tube $\mathcal{T}(L,\gamma)$ that is composed of $i = 1,\dots, M$ polytopes\footnote{It is required that {$\mathcal{T}_i$ is a polytope} or there exits an inner polytopic approximation that covers the entire state space of $\mathcal{R}_i$, i.e., $\text{proj}_x\mathcal{T}_i = \mathcal{R}_i$.} 
\begin{equation}
	\label{eq:mpc:app:subtube}
	\mathcal{T}_i := \begin{Bmatrix}\left[x^\top, u^\top\right]^\top \, | \, \left[T^{\mathrm{x}}_{i} T^{\mathrm{u}}_{i}\right] \left[x^\top, u^\top\right]^\top \leq T^0_i \end{Bmatrix},
\end{equation}
with the connected union $\cup_{i=1}^{M} \mathcal{T}_i$.
By plugging~\eqref{eq:mpc:app:tmu} into~\eqref{eq:mpc:app:subtube} and substituting the state vector $x$ by vertices of $\mathcal{R}_i$ per~\eqref{eq:mpc:app:region_vertex} we obtain a set of polynomials 
\begin{equation}
	\label{eq:mpc:app:poly}
	\rho_i(\boldsymbol{\alpha},\lambda) := T^0_i - T^x_i \lambda - T^u_i \widetilde{\mu}(\lambda), \, \quad \lambda \in \Lambda_i,
\end{equation}
for all $i = 1, \dots, M$. 
These polynomials can be further homogenized by multiplying single monomials with $\left(\sum_{j=1}^{|\nu_i|}\lambda_j\right)$ until all have the same degree. The Pólya's polynomials can be then defined as follows
\begin{equation}
	\label{eq:mpc:app:homogen}
	\rho_i^\varphi(\boldsymbol{\alpha},\lambda) = \rho_i(\boldsymbol{\alpha},\lambda) \cdot \left(\sum_{j=1}^{|\nu_i|}\lambda_j\right)^\varphi,
\end{equation}
where $\varphi$ is a Pólya degree, see~\cite{MOK2008524}.
%

Finally, as shown in~\cite{uiam1171}, if a stabilizing Lyapunov function $L(\cdot)$ and a stability tube $\mathcal{T}(L,\gamma)$ satisfying all the aforementioned assumptions can be constructed, then the coefficients $c_i^\varphi$ of the Pólya polynomials~\eqref{eq:mpc:app:homogen} can be defined symbolically. Subsequently, the following linear program
\begin{subequations}
	\label{eq:mpc:app:opt}
	\begin{align}
		\min_{\boldsymbol{\alpha}} \;& z(\cdot) \label{eq:mpc:app:opt_obj}\\
		\text{s.t.} \ 	&c_i^\varphi \geq 0 \, , \, i = 1,\dots, M 	\label{eq:mpc:app:opt:cons}  
	\end{align}
\end{subequations}
can be solved to obtain coefficients $\boldsymbol{\alpha}$ of the approximated control law $\widetilde{\mu}(x)$ as in~\eqref{eq:mpc:app:tmu} that stabilizes the system~\eqref{eq:mpc:theory:sys}, while providing recursive feasibility of the original constraints in~\eqref{eq:mpc:theory:cons}. 
Note, there are multiple alternatives how to define the objective function~\eqref{eq:mpc:app:opt_obj}. The most straightforward one is to choose $z(\cdot) = 0$, i.e., state~\eqref{eq:mpc:app:opt} as a pure feasibility problem. Another option is to minimize the point-wise distance between the original and the approximated control policy over selected points from $\Omega$. One suitable candidate is to choose vertices of each region $\mathcal{R}_i$, i.e., $z(\cdot) = \sum_{i=1}^M \sum_{j=1}^{|\nu_i|} ||\mu(\nu_{i,j}) - \widetilde{\mu}(\nu_{i,j})||^\pnorm$ with $\pnorm \in {\{1,\inf\}}$.

\begin{corollary}
	Consider the approximated control law~\eqref{eq:mpc:app:tmu} of robust MPC design problem in~\eqref{eq:mpc:theory:MPC} is given to satisfy~\eqref{eq:mpc:app:opt}, and let Assumption~\ref{asm:MPC} hold. If approximated control law $\widetilde{\mu}(x)$ in~\eqref{eq:mpc:app:tmu} ensures the closed-loop system stability and recursive feasibility of~\eqref{eq:mpc:theory:MPC} for any $x(t) \in \Omega$, then also $\widetilde{\mu}(\hat{x})$ according to Definition~\ref{def:encoded_system} preserves these properties w.r.t. any realization of the bounded quantization error $q_\text{x}(t)$, $q_\text{u}(t)$, and the process disturbance $d(t)$ from $\begin{bmatrix}d^\top(t) & q_\text{x}^\top(t) & q_\text{u}^\top(t) \end{bmatrix}^\top \in \mathcal{W}$.
\end{corollary}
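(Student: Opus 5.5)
The argument mirrors the proof of Lemma~\ref{lem:stability_encoded}, with the PWA law $\mu$ replaced by the polynomial law $\widetilde{\mu}$ and the Lyapunov decrease taken from the stability tube $\mathcal{T}(L,\gamma)$ in~\eqref{eq:mpc:app:tube}.

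The first step fixes what feasibility of~\eqref{eq:mpc:app:opt} gives. Nonnegativity of all Pólya coefficients $c_i^\varphi$ implies, by Pólya's theorem~\cite{MOK2008524}, that $\rho_i^\varphi(\boldsymbol{\alpha},\lambda)\ge 0$ for all $\lambda\in\Lambda_i$. Dividing by $(\sum_j\lambda_j)^\varphi=1$ then gives $\rho_i(\boldsymbol{\alpha},\lambda)\ge0$. Mapping $\lambda$ to $x\in\mathcal{R}_i$ through~\eqref{eq:mpc:app:region_vertex}, this says $[x^\top,\widetilde{\mu}(x)^\top]^\top\in\mathcal{T}_i$ for every $x\in\mathcal{R}_i$ and every $i$. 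Hence $(x,\widetilde{\mu}(x))\in\mathcal{T}(L,\gamma)$ on all of $\Omega$. By the definition of the tube, for every $w\in\mathcal{W}$ we then have $\widetilde{\mu}(x)\in\mathcal{U}$, $\widetilde{f}(x,\widetilde{\mu}(x),w)\in\Omega$, and $L(\widetilde{f}(x,\widetilde{\mu}(x),w))\le\gamma L(x)$. These give, respectively, constraint satisfaction, recursive feasibility (invariance of $\Omega$), and robust exponential decrease of $L$. This is exactly the hypothesis of the corollary, namely the result of~\cite{uiam1171}.

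The second step transfers these properties to the quantized implementation. Under Definition~\ref{def:encoded_system}, the plant receives $\hat u=\widetilde{\mu}(\hat x)+q_{\mathrm{u}}$ with $\hat x=x+q_{\mathrm{x}}$. As in~\eqref{prf:quantize_system}, I would rewrite the closed loop as $A x+B\widetilde{\mu}(\cdot)+Ww$ with $w=[d^\top,q_{\mathrm{x}}^\top,q_{\mathrm{u}}^\top]^\top\in\mathcal{W}$. This is the same uncertain model~\eqref{eq:mpc:theory:sys} for which the tube was built. The quantization errors therefore enter only through $w$, and since the tube conditions hold for all $w\in\mathcal{W}$, invariance of $\Omega$, admissibility of the input, and the decrease $L(x^+)\le\gamma L(x)$ carry over. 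Iterating gives $L(x(t))\le\gamma^tL(x(0))\to0$. Since $L$ is PWA and positive definite (it is $J^\star$), this yields convergence and stability.

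The main obstacle is the mismatch between where the law is evaluated and where the tube condition holds. The law is evaluated at $\hat x$, not at $x$, while the tube condition is stated at the true state $x$. Moreover, $\hat x$ may leave $\Omega$, where $\widetilde{\mu}$ is not certified, and the input constraint involves $\hat u$ rather than $\widetilde{\mu}$. I would handle this exactly as Lemma~\ref{lem:stability_encoded} does. The effect $A q_{\mathrm{x}}$ is absorbed into $Ww$, so that the quantized loop is identified with the model~\eqref{eq:mpc:theory:sys2}$\Leftrightarrow$\eqref{eq:mpc:theory:sys}. The constraints are interpreted on the enlarged sets $\mathcal{U}\oplus\mathbb{Q}_{\mathrm{u}}$ and $\mathcal{X}\oplus\mathbb{Q}_{\mathrm{x}}$ of Definition~\ref{def:encoded_system}. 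Where this identification is only formal, I would state explicitly that the tube and $\Omega$ are assumed to be designed with this margin.
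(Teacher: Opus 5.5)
Your two-part structure matches the paper's. First you establish tube membership $(x,\widetilde{\mu}(x))\in\mathcal{T}(L,\gamma)$; the paper simply cites Lemma~3.5 of~\cite{uiam1171}. Your re-derivation is fine, although the implication you need (nonnegative coefficients give nonnegativity on $\Lambda_i$) is elementary, and P\'{o}lya's theorem is its converse. Second, you transfer to the quantized loop via Lemma~\ref{lem:stability_encoded}. The gap is in how you resolve the obstacle you correctly identify. Absorbing $Aq_{\mathrm{x}}$ into $Ww$ only treats the state term of the plant update. It does not touch the place where $q_{\mathrm{x}}$ actually acts, which is the argument of the controller. After that absorption the loop is $x^+=Ax+B\widetilde{\mu}(x+q_{\mathrm{x}})+Ww$. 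Since the applied input is not $\widetilde{\mu}(x)$, the tube cannot be invoked at the true state $x$. Invoking it at $\hat{x}$ instead requires $\hat{x}\in\Omega$ and yields only $L(x^+)\le\gamma L(\hat{x})$, so the recursion $L(x(t))\le\gamma^tL(x(0))$ does not follow. A generic ``tube and $\Omega$ designed with margin'' does not close this, because it does not say which quantity must fit in which set.

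The step of Lemma~\ref{lem:stability_encoded} that actually carries the argument is~\eqref{prf:quantize_control_input}. There the whole discrepancy between the applied input and the certified input at the true state is declared to be the input error: $q_{\mathrm{u}}:=\hat{u}-\widetilde{\mu}(x)=[\widetilde{\mu}(x+q_{\mathrm{x}})-\widetilde{\mu}(x)]+e$, with $e$ the remaining encoding error. This error is required, ``by design'' in the paper's words, to lie in $\mathbb{Q}_{\mathrm{u}}$. With that identification:
\begin{itemize}
\item the closed loop is literally $x^+=Ax+B\widetilde{\mu}(x)+Ww$ with $w\in\mathcal{W}$;
\item the tube is used only at $x\in\Omega$, so $\hat{x}\notin\Omega$ is harmless, $\widetilde{\mu}$ being a polynomial;
\item $\hat{u}\in\mathcal{U}\oplus\mathbb{Q}_{\mathrm{u}}$ is exactly the enlarged set of Definition~\ref{def:encoded_system}.
\end{itemize}
Your margin assumption should be stated in this form. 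For example, take a Lipschitz constant $\ell$ of $\widetilde{\mu}$ on $\Omega\oplus\mathbb{Q}_{\mathrm{x}}$ and require $\ell\max_{q_{\mathrm{x}}\in\mathbb{Q}_{\mathrm{x}}}\Vert q_{\mathrm{x}}\Vert+\max\Vert e\Vert$ to be bounded by the size of $\mathbb{Q}_{\mathrm{u}}$.

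Independently of this, avoid concluding $L(x(t))\to 0$. At $x=0$ the tube condition demands $L(Bu+Ww)\le\gamma L(0)=0$ for all $w\in\mathcal{W}$. This fails for every $u$, because the origin lies in $\interior(\mathcal{W})$ (Assumption~\ref{asm:MPC}) and $L$ is positive definite, as you use. Under persistent disturbances the robust decrease can only be certified away from the origin. The defensible conclusion is therefore robust invariance of $\Omega$ with ultimate boundedness, not convergence to the origin.
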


\begin{proof}
	The proof has two parts. First, we need to prove that the approximated control law $\widetilde{\mu}(x)$ in~\eqref{eq:mpc:app:tmu} satisfies the closed-loop system stability and the recursive feasibility of~\eqref{eq:mpc:theory:MPC} for any feasible $x \in \Omega$. We recall Lemma~3.5. in~\cite{uiam1171} implying that for any feasible solution $\boldsymbol{\alpha}$ of~\eqref{eq:mpc:app:opt} the resulting control law $\widetilde{\mu}(x)$ in~\eqref{eq:mpc:app:tmu} preserves by design the closed-loop stability of system~\eqref{eq:mpc:theory:sys2} and the recursive feasibility subject to the constrained system states and control inputs in~\eqref{eq:mpc:theory:MPC}. 
	
	Secondly, it remains to prove that also the approximated control law $\widetilde{\mu}(\hat{x})$ in~\eqref{eq:mpc:app:tmu} evaluated for quantize system state $\hat{x}(t) = x(t) + q_{\mathrm{x}}(t)$ preserves the closed-loop system stability and recursive feasibility for any $q_{\mathrm{x}} \in \mathbb{Q}_{x}$, $q_{\mathrm{u}} \in \mathbb{Q}_{u}$, and $d \in \mathfrak{D}$ in system~\eqref{eq:mpc:theory:sys}.
	By substituting any bounded quantization errors $q_{\mathrm{u}} \in \mathbb{Q}_{u}$, $q_{\mathrm{x}} \in \mathbb{Q}_{x}$ into~\eqref{prf:quantize_system}, it follows from Lemma~\ref{lem:stability_encoded} and the from equivalence of~\eqref{prf:quantize_system} and~\eqref{eq:mpc:theory:sys}, that
	$\widetilde{\mu}$ by design ensures closed-loop system stability and recursive feasibility of $\widetilde{\mu}(\hat{x})$ in~\eqref{eq:mpc:app:tmu} for any feasible $\hat{x} \in \Omega$.  \hfill $\Box$
\end{proof}

\section{Experimental Validation of Encrypted MPC}
\label{sec:CaseStudy}

This section presents a \textit{step-by-step} implementation of secured control with encrypted control law in the form of~\eqref{eq:mpc:app:tmu} in a closed-loop 
control scenario. We explore the parameter settings of the BFV implementation and test it on a laboratory device, where the applicability of the polynomial 
approximation is clearly visible as well as the effects of the BFV settings on the closed-loop system. The \texttt{Python} language was used to simulate and 
evaluate results from encrypted closed-loop process control. All experiments were performed on a $64$-bit operating system $16$-core CPU with 
$\SI{3.4}{\giga\hertz}$ processor and $\SI{128}{\giga\byte}$ of RAM. All computations were performed in a single-threaded mode. 

\subsection{Laboratory Device Flexy $2.0$}
\label{sec:device}

We demonstrate the presented control algorithm on single-input single-output laboratory device Flexy~$2.0$ designed as a benchmark laboratory system for control applications{, with detailed parameters given in~\cite{kaluz:2019:flexy}}. Manipulated variables are 
represented by fan speed defined in percentage. The fan produces an airflow bending the flex resistor, thus changing the electric resistance which is the output 
variable measured in percentage. The appearance of Flexy~$2.0$ is depicted in Figure~\ref{flexy}.

\begin{figure}[ht]
	\centering
	\subfigure[Laboratory device]{{\includegraphics[width=0.45\linewidth, trim={3cm 2cm 2.5cm 3cm}, clip]{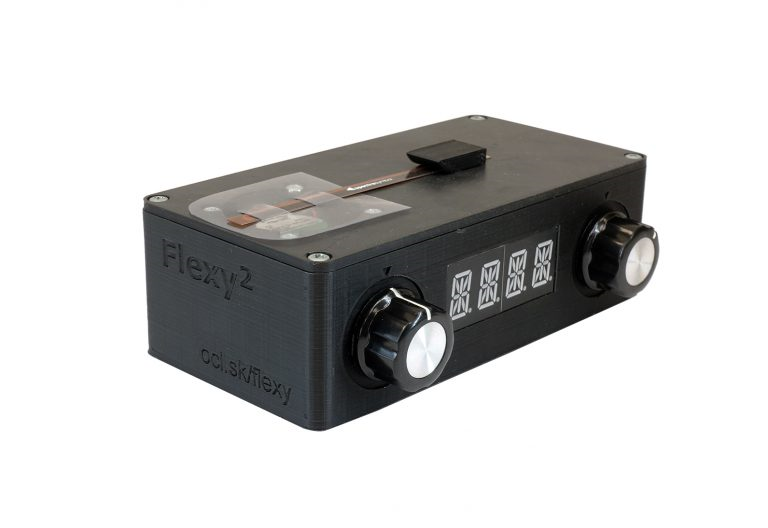}  }}
	\subfigure[{Side-profile sketch}]{{\includegraphics[width=0.45\linewidth]{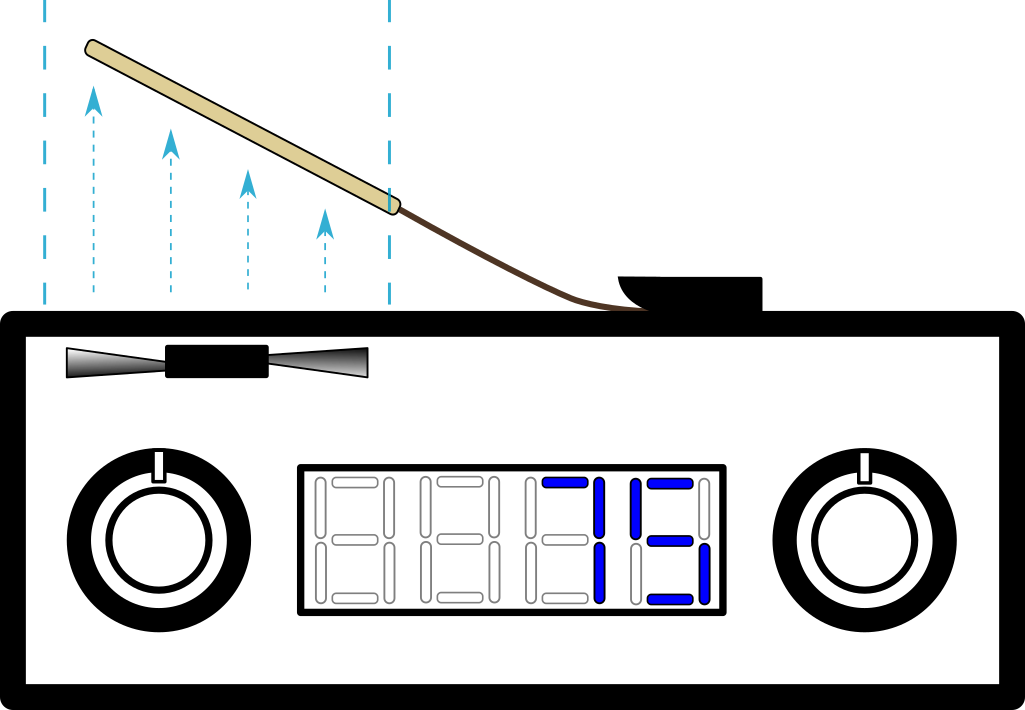}  }}
	\caption{{Laboratory device Flexy~$2.0$ with a flex-bend as controlled variables and fan-speed as manipulated variable.}}
	\label{flexy}
\end{figure}

We adopt the linear model of Flexy $2.0$ from~\cite{Flexy2} in the form of a continuous transfer function defined as
\begin{equation}
	G(s) = \frac{0.89}{0.66s + 1}
\end{equation}  
which is further transformed into a discrete time state space model

\begin{equation}
	x(t+T_\text{s}) = 0.966 \, x(t) + 0.101 \, u(t),
	\label{eq:process}
\end{equation}
with sampling period $ T_\text{s} = \SI{10}{\milli\second}$. 
By considering both quantization and process disturbance errors, i.e., $w(t) = \begin{bmatrix}d^\top(t) & q_\text{x}^\top(t) & q_\text{u}^\top(t) \end{bmatrix}^\top$, the final uncertain linear time-invariant model~\eqref{eq:mpc:theory:sys} is represented by
\begin{equation}
	\begin{array}{rl}
		x(t+T_\text{s}) = &0.966 \, x(t) + 0.101 \, u(t) + \\
		&\underbrace{\begin{bmatrix} 1.000 & 0.966 & 0.101 \end{bmatrix}}_W\, \ w(t),
		\label{eq:process}
	\end{array}
\end{equation}
and is utilized for the MPC design in~\eqref{eq:mpc:theory:MPC_model}.

Flexy device represents a laboratory-scale benchmark for systems that mimic the challenges of controlling industrial devices such as: 
\begin{itemize}
	\item satisfying physical constraints,
	\item handling the systems with fast dynamics,
	\item signal filtering and processing,
	\item limitation of the device's microcontroller to handle measurements and apply control actions,
\end{itemize}
while respecting the sampling period at the level of milliseconds. 


\subsection{Encrypted Control Setup}
\label{subsec:control_setup}

This section describes the control setup suitable for the use of the BFV scheme in secured process control. 
The encrypted MPC approach is implemented according to the scheme in Figure~\ref{fig:encrypted_control_scheme}. 

\begin{figure}[h!]
	\centering
	\includegraphics[width=0.9\linewidth]{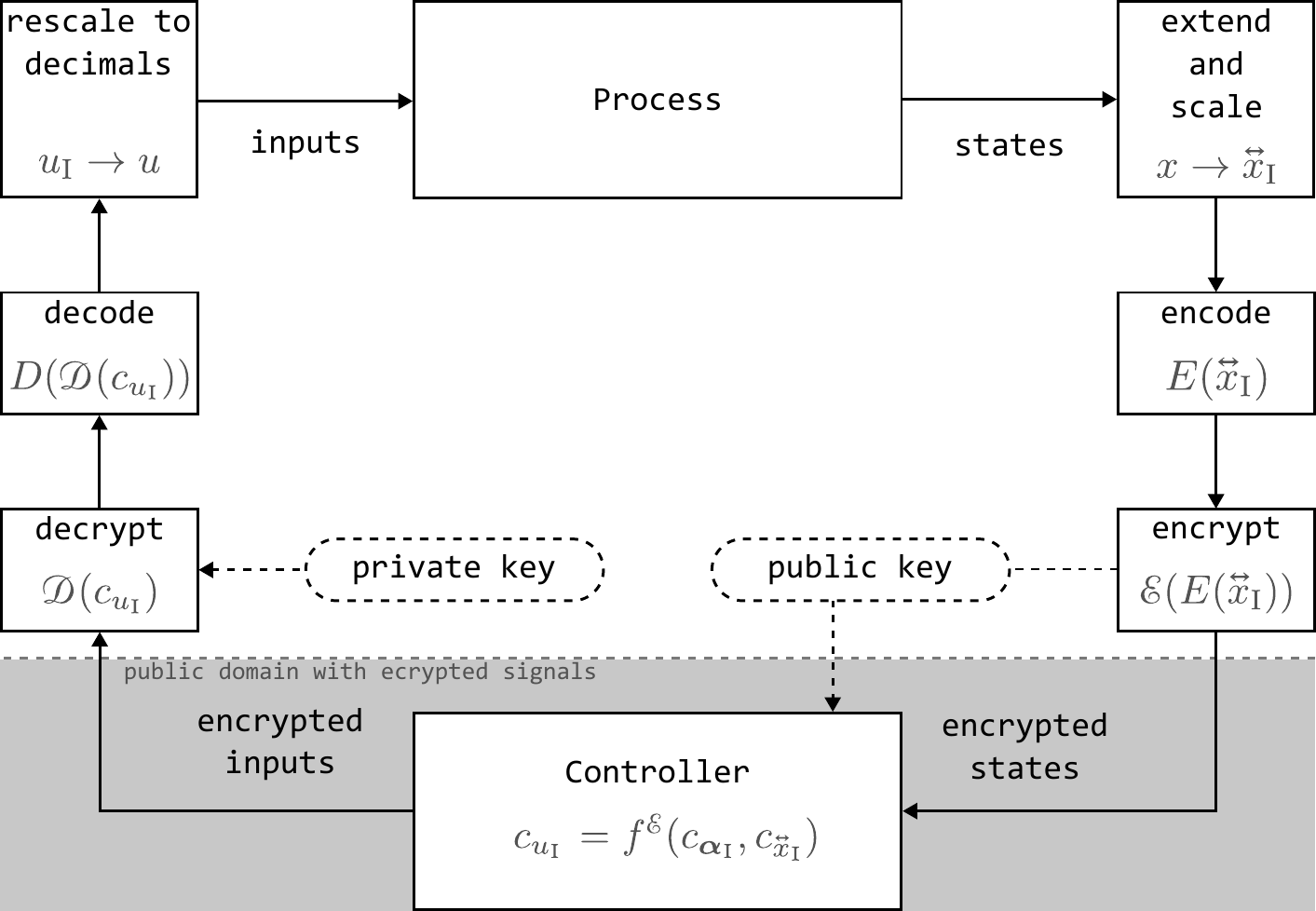}
	\caption{Block diagram of the encrypted polynomial control setup.}
	\label{fig:encrypted_control_scheme}
\end{figure}

We consider \texttt{TenSEAL} library~\cite{tenseal2021} which is a \texttt{Python} implementation of \texttt{Microsoft SEAL}~\cite{sealcrypto}. For use of the BFV scheme, the following parameters have to be set:
\begin{itemize}
	\item \textit{Polynomial modulus degree ($\mathcal{N}$)} defining the maximal degree of each polynomial in polynomial space $R$. Parameter $\mathcal{N}$ has to be positive and a power of $2$, typically $\mathcal{N} \in \{ 2\,048, 4\,096, \dots \} $. The security level increases with growing $\mathcal{N}$. However, higher $\mathcal{N}$ means higher computational complexity, thus the evaluation of mathematical operations over ciphertexts is slower. In \texttt{TenSEAL} is degree $\mathcal{N}$ a mandatory parameter, thus it has to be defined.
	\item \textit{Plaintext modulus ($\tau$)} used to handle the coefficient size of plaintexts. This parameter has to be a prime and basically represents the largest number that can be encoded and encrypted. The value of $\tau$ is selected such that $\mod(2\mathcal{N},\tau)=1$. Plaintext modulus $\tau$ is a mandatory parameter. 
\end{itemize} 
Two parameters $\mathcal{N}$ and $\tau$ are elements of a structure called \textit{context}, and are user-defined settings.
Finally, to perform homomorphic multiplications we generate relinearization(evaluation) keys to handle the size of resulting ciphertexts.

Since the majority of control applications operate with floating point values, we adjust states and controller coefficients into integers with relation~\eqref{eq:quant}, hence
\begin{subequations}
	\begin{align}
		\ui{x}{I} &= \nint{x \cdot 10^{\ui{\theta}{x}}}, \label{eq:state:int}\\
		\ui{\boldsymbol{\alpha}}{I} &= \nint{\boldsymbol{\alpha} \cdot 10^{\theta_{\alpha}}}, \label{eq:ctrl:int}
	\end{align}
\end{subequations}
where in $\ui{x}{I}$ is integer value of state $x$ with precision degree $\ui{\theta}{x}$ and $\ui{\boldsymbol{\alpha}}{I}$ is polynomial controller with integer coefficients with precision $\theta_{\alpha}$.

Next, we present the implementation of encrypted polynomial closed-loop control for single-input single-output systems:
\begin{enumerate}
	\item \textit{Modify states}: first, we adjust the measured state $x$ to vector form $\mathbf{x}$ with respect to the order of the polynomial controller, hence
	\begin{equation}
		\mathbf{x} = \begin{bmatrix} x, & x^2, & \dots, & x^\kappa \end{bmatrix},
		\label{eq:states:extend}
	\end{equation}
	where $\kappa$ is the order of the polynomial controller.
	\item \textit{Adjust to integers}: next, we convert $\mathbf{x}$ to integer vector $\ui{\mathbf{x}}{I}$ with precision degree $\ui{\theta}{x}$ by using~\eqref{eq:state:int} element-wise, hence
	\begin{equation}
		\ui{\mathbf{x}}{I} = \nint{\mathbf{x} \cdot 10^{\ui{\theta}{x}}}.
		\label{eq:states:ajust}
	\end{equation}
	We encode and encrypt $\ui{\mathbf{x}}{I}$ send ciphertext \mbox{$c_{\ui{\mathbf{x}}{I}} = \mathcal{E}(E(\ui{\mathbf{x}}{I}))$} to the control unit. 
	\item \textit{Evaluate control law}: the encrypted controller is represented by ciphertext $c_{\ui{\boldsymbol{{\alpha}}}{I}} = \mathcal{E}(E(\ui{\boldsymbol{{\alpha}}}{I}))$ obtained by encrypting polynomial controller in integer form $\ui{\boldsymbol{{\alpha}}}{I}$~\eqref{eq:ctrl:int} with precision degree $\theta_{\alpha}$. We evaluate the control law~\eqref{eq:mpc:app:tmu} over ciphertexts $c_{\ui{\mathbf{x}}{I}}$ and $c_{\ui{\boldsymbol{{\alpha}}}{I}}$ and return ciphertext of the encrypted control input $c_{\ui{u}{I}}$ computed as
	\begin{equation}
		c_{\ui{u}{I}} = f^{\mathcal{E}}(c_{\ui{\boldsymbol{{\alpha}}}{I}}, c_{\ui{\mathbf{x}}{I}}) = 
		\sum_{i=1}^{\kappa} c_{\ui{\boldsymbol{{\alpha}}}{I}, (i)} \cdot c_{\ui{\mathbf{x}}{I}, (i)}				
		\label{eq:sec:ctrl}
	\end{equation}
	which basically represents the scalar product of two encrypted vectors represented by ciphertexts $c_{\ui{\boldsymbol{{\alpha}}}{I}}$ and $c_{\ui{\mathbf{x}}{I}}$.
	\item \textit{Rescale control input}: encrypted control input is decrypted and rescaled back by relation
	\begin{equation}
		u = \frac{D(\mathcal{D}(c_{\ui{u}{I}}))}{10^{(\ui{\theta}{x}+\theta_{\alpha})}},
	\end{equation}
	where $u$ is the control input ready to be applied to the process. The control algorithm~\ref{alg:poly:ctrl} includes all of the necessary steps for encrypted polynomial control considering BFV framework. 	
	
	\begin{algorithm}
		\caption{Encrypted Polynomial Control}
		\label{alg:poly:ctrl}
		\begin{algorithmic}[1]
			\State Measure state $x_k$
			\State Extend to vector $\mathbf{x}_k$
			\State Scale to integers $\mathbf{x}_{\text{I}, k} = \nint{\mathbf{x}_k\cdot 10^{\ui{\theta}{x}}}$
			\State Encode and encrypt $c_{\ui{\mathbf{x}}{I}, k} = \mathcalE(E(\mathbf{x}_{\text{I}, k}))$
			\State Evaluate control law $c_{\ui{u}{I}, k} = f^{\mathcalE}(c_{\ui{\boldsymbol{\alpha}}{I}}, c_{\ui{\mathbf{x}}{I}, k})$
			\State Decrypt and decode $\mathbf{x}_{\text{I}, k} = D(\mathcalD(c_{\ui{u}{I}, k}))$
			\State Rescale $u_k = \frac{\mathbf{x}_{\text{I}, k}}{10^{(\ui{\theta}{x} + \theta_{\alpha})}}$
			\State Apply $u_k$
		\end{algorithmic}
	\end{algorithm}
	
	Note that both polynomial controllers and states are represented by a single ciphertext and controller $\boldsymbol{\alpha}$ is adjusted to integers and encrypted only once. The control law evaluated over ciphertexts (Algorithm~$1$, line~$7$) respects the relation~\eqref{eq:mpc:app:tmu}.
	
\end{enumerate}

The presented control setup will ensure the encrypted polynomial control with the use of the BFV cryptographic scheme.

\subsection{Results and Discussion}
\label{sec:results}

We construct an MPC controller for system~\eqref{eq:process}, where states are subjected to $|x| \leq 4$ and control inputs to $|u| \leq 1$.
By solving~\eqref{eq:mpc:theory:MPC} with $\pnorm = 1$, $N = 10$, $Q_{\mathrm{u}} = 1$, $Q_{\mathrm{x}} = 10$, and $\mathcal{W}$ was such that $-0.05 \leq {\begin{bmatrix} 1.000 & 0.966 & 0.101 \end{bmatrix}} w(t) \leq 0.05$, we obtained the explicit feedback law as in~\eqref{eq:mpc:theory:eMPC:mu} defined over $M = 14$ regions {constructed in $0.4856$ seconds. The memory footprint of this controller was $0.84$\,kB.} By choosing $J^\star(\cdot) = L(\cdot)$ and selecting $\gamma = 0.99$ we synthesized the stabilizing tube~\eqref{eq:mpc:app:tube}. The LP problem in~\eqref{eq:mpc:app:opt} was constructed in \texttt{YALMIP}~\cite{yalmip:paper} with $\pdegree = 3$, $\varphi = 1$, and  $z(\cdot) = \sum_{i=1}^M \sum_{j=1}^{|\omega_i|} ||\mu(\omega_{i,j}) - \widetilde{\mu}(\omega_{i,j})||^1$, where $\omega_{i}$ is a grid of $10$ equidistantly split points across the region $\mathcal{R}_i$, i.e., $|\omega_i| = 10$. The solution of~\eqref{eq:mpc:app:opt}, {formulated in \texttt{YALMIP} within $7.29$ seconds and solved via \texttt{GUROBI} in $1.11$ seconds}, was the stabilizing polynomial feedback law as in~\eqref{eq:mpc:app:tmu} with 
\begin{equation}
	\boldsymbol{\alpha} = \begin{bmatrix} 0.0000, -2.3110, 0.0098, 0.00078 \end{bmatrix},
	\label{eq:controller}
\end{equation}
with the memory footprint of $0.2$ kB.
To determine the suboptimality of $\widetilde{\mu}(\cdot)$ w.r.t. $\mu(\cdot)$, we have split the domain $\Omega$ into $100$ equidistantly divided points and used them as initial conditions for closed-loop profiles. The suboptimality was computed as $(J^\star - \widetilde{J})/J^\star \, 100\,\% = 56\,\%$ where $J^\star$ and $\widetilde{J}$ are the overall values of~\eqref{eq:mpc:theory:MPC_obj} for the feedback law $\mu(\cdot)$ and $\widetilde{\mu}(\cdot)$, respectively, within all performed simulations. 
The graphical comparison of these controllers is presented in Figure~\ref{fig:approxeMPC}.

\begin{figure}[ht]
	\centering
	\setlength\figureheight{5cm}
	\setlength\figurewidth{0.7\linewidth}
	\input{images/ctrl\_2.tikz}\label{fig:control:ctrl}
	\caption{Approximated explicit MPC policy. Here, the optimal feedback law $\mu(\cdot)$ is represented by the blue line, the approximated polynomial control law $\widetilde{\mu}$ be red-dashed line, and the stabilizing tube $\mathcal{T}$ is defined by the gray area.}
	\label{fig:approxeMPC}
\end{figure}

We consider a control law~\eqref{eq:mpc:app:tmu} with polynomial controller~\eqref{eq:controller} in the task of disturbance rejection, while respecting the steady state values $\ui{x}{s} = \SI{73}{\percent}$ and $\ui{u}{s} = \SI{50}{\percent}$. The experiment of encrypted polynomial control included two various configurations of the BFV framework. We fix the polynomial modulus degree $\mathcal{N} = 4\,096$ and generate two different values of plaintext modulus $\tau$ for given $\mathcal{N}$ and precision degrees $\ui{\theta}{x}$ and $\theta_{\alpha}$ with Algorithm~\ref{alg:plain:mod:gener}
\begin{algorithm}
	\caption{Plaintext modulus generation}
	\label{alg:plain:mod:gener}
	\begin{algorithmic}[1]
		\State $\tau \gets 10^{(\ui{\theta}{x}+\theta_{\alpha}+1)}$
		\While{$(2\mathcal{N} \bmod \tau) \neq 1$}
		\State $\tau \gets \texttt{nextPrime}(\tau)$
		\EndWhile
	\end{algorithmic}
\end{algorithm}
where the initial value of $\tau$ consists of precision for state measurement $\ui{\theta}{x}$, precision for polynomial controller $\theta_{\alpha}$ and some additional space for computations represented by adding $1$. Function $\texttt{nextPrime}(v)$ generates prime larger than $v$. For each experiment, we measure the maximal $\ui{t}{max}$ and average $\ui{t}{avg}$ computational time of evaluating control law~\eqref{eq:sec:ctrl} over ciphertexts. The analysis was realized for time span of $\ui{t}{exp} = \SI{30}{\second}$ during which we introduced a control input disturbances $d_1 = \SI{0}{\percent}$ and $d_2 = \SI{100}{\percent}$ at times $t_{\text{d}_1} = \SI{10}{\second}$ and $t_{\text{d}_2} = \SI{20}{\second}$, respectively. Both disturbances lasted $\SI{0.5}{\second}$. The experimental setups $\mathcal{S}_1$ and $\mathcal{S}_2$ along with their computational performance results are listed in Table~\ref{tab:setup:control}. The graphical results of both experiments are presented in Figure~\ref{fig:control:perf}.

\begin{table}[h!]
	\caption{Experimental encrypted control setup with corresponding computational times.}
	\label{tab:setup:control}
	\centering
	\begin{tabular}{c|c|c|c|c|c|c}
		\toprule
		\thead{Setup} & \thead{$\mathcal{N}$} & \thead{$\tau$} & \thead{$\ui{\theta}{x}$} & \thead{$\theta_{\alpha}$} & \thead{$\ui{t}{avg}$ \\ $[\SI{}{\milli\second}] $} & \thead{$\ui{t}{max}$ \\ $[\SI{}{\milli\second}] $} \\
		\midrule 
		$S_1$ & $4\,096$ & $1\,032\,193$   & $1$ & $4$ & $\SI{3.4}{}$ & $\SI{6.3}{}$ \\
		$S_2$ & $4\,096$ & $100\,016\,129$ & $3$ & $4$ & $\SI{3.3}{}$ & $\SI{5.5}{}$ \\ 
		\bottomrule	
	\end{tabular} 	
\end{table}

\begin{figure}[ht]
	\centering
	\setlength\figureheight{2.5cm}
	\setlength\figurewidth{.8\linewidth}
	\subfigure{\input{images/x.tikz}\label{fig:control:x}}
	\subfigure{\input{images/u.tikz}\label{fig:control:u}}
	\subfigure{\input{images/cpt.tikz}\label{fig:control:cpt}}
	\caption{Control performance of encrypted polynomial control. The upper graph represents the state profile acquired by control inputs depicted in the middle graph. The lower graph depicts the computational times of evaluating control law over ciphertexts.}
	\label{fig:control:perf}	
\end{figure}

Reading out from the results depicted in Figure~\ref{fig:control:perf} it is obvious that the presented control algorithm worked well with both setups of the BFV scheme. For $\tau_1=1\,032\,193$, the maximal computational time was $\ui{t}{max} = \SI{6.3}{\milli\second}$ and the average time $\ui{t}{avg} = \SI{3.4}{\milli\second}$. Both times remained within the interval of sampling period $\ui{T}{s} = \SI{10}{\milli\second}$. For the setup with $\tau_2=100\,016\,129$ were the computational times at the same level as for the first setup ($\ui{t}{max} = \SI{5.5}{\milli\second}$, $\ui{t}{avg} = \SI{3.3}{\milli\second}$), since $\mathcal{N}$ is fixed, thus the computational complexity remains unchanged.
Finally, we have also computed average quantization errors of states $q_\text{x}$ and control actions $q_\text{u}$. For the setup with $\tau_1=1\,032\,193$ we have detected $q_\text{x} = 170.0\cdot10^{-4}$ and $q_\text{u} = 41\cdot10^{-4}$, while for the setup $\tau_2=100\,016\,129$ we obtained $q_\text{x} = 1.7\cdot10^{-4}$ and $q_\text{u} = 0.6\cdot10^{-4}$.

\section{Conclusions}\label{sec:Conclusion}

Addressing data and controller privacy with homomorphic encryption strategies and considering explicit model predictive control and its versatile deployment, the proposed novel control method eliminates the complexity of online optimization or the necessity to solve the point location problems in encryption frameworks. By merging the fully homomorphic encryption frameworks with the polynomial approximation of the explicit model predictive controller, we achieved cryptographically secured close-to-optimal constrained process control, which has not been done so far. This manuscript also presents an experimental implementation and validation in a laboratory-scaled setup that confirms the viability of this encrypted polynomial control, guaranteeing stability, recursive feasibility, and data security. This research not only significantly contributes to the academic domain but also provides industries with a path to data safety in cloud-based control applications.

\section*{Acknowledgments}
\label{sec:acknowledgments}

The authors gratefully acknowledge the contribution of the Scientific Grant Agency of the Slovak Republic under the grants 1/0339/26, 1/0239/24, the Slovak Research and Development Agency under the project APVV-24-0007, and the Research and Innovation Authority (VAIA) under the grant no. 09I01-03-V04-00024 (Slovak Research Excellence in Advanced Control for Smart Industries). M.~Klaučo is also supported by the European Union project ROBOPROX (Reg. No. CZ.02.01.01/00/22\_008/0004590).

\bibliographystyle{elsarticle-num}
\bibliography{references}

@ARTICLE{teranishi:2024:tcns,
  author={Teranishi, Kaoru and Sadamoto, Tomonori and Kogiso, Kiminao},
  journal={IEEE Transactions on Control of Network Systems}, 
  title={Input–Output History Feedback Controller for Encrypted Control With Leveled Fully Homomorphic Encryption}, 
  year={2024},
  volume={11},
  number={1},
  pages={271-283},
  doi={10.1109/TCNS.2023.3280460}}

@article{Darup2021,
  author = {Darup, M. S. and Alexandru, A. B. and Quevedo, D. E. and Pappas, G. J.},
  journal = {IEEE Control Systems Magazine}, 
  title = {Encrypted Control for Networked Systems: An Illustrative Introduction and Current Challenges}, 
  year = {2021},
  volume = {41},
  number = {3},
  pages = {58-78},
  doi = {10.1109/MCS.2021.3062956}
}

@ARTICLE{7165580,
  author={Singh, J. and Pasquier, T. and Bacon, J. and Ko, H. and Eyers, D.},
  journal={IEEE Internet of Things Journal}, 
  title={Twenty Security Considerations for Cloud-Supported Internet of Things}, 
  year={2016},
  volume={3},
  number={3},
  pages={269-284},
  doi={10.1109/JIOT.2015.2460333}}

@incollection{KNAPP201541,
  title = {Chapter 3 - Industrial Cyber Security History and Trends},
  author = {E. D. Knapp and J. T. Langill},
  booktitle = {Industrial Network Security (Second Edition)},
  publisher = {Syngress},
  edition = {Second Edition},
  address = {Boston},
  pages = {41-57},
  year = {2015},
  isbn = {978-0-12-420114-9},
  doi = {https://doi.org/10.1016/B978-0-12-420114-9.00003-4}
}

@misc{BFV2012,
  author = {Fan, J. and Vercauteren, F.},
  title = {Somewhat Practical Fully Homomorphic Encryption},
  howpublished = {Cryptology ePrint Archive, Report 2012/144},
  year = {2012}
}

@ARTICLE{shen:2021:smcs,  
  author={Shen, X. and Li, X.},  
  journal={IEEE Transactions on Systems, Man, and Cybernetics: Systems},   
  title={Data-Driven Output-Feedback {LQ} Secure Control for Unknown Cyber-Physical Systems Against Sparse Actuator Attacks},   
  year={2021},  
  volume={51},  
  number={9},  
  pages={5708-5720},  
  doi={10.1109/TSMC.2019.2957146}
}

@INPROCEEDINGS{stobbe:2022:cdc,
  author={Stobbe, Pieter and Keijzer, Twan and Ferrari, Riccardo M.G.},
  booktitle={2022 IEEE 61st Conference on Decision and Control (CDC)}, 
  title={A Fully Homomorphic Encryption Scheme for Real-Time Safe Control}, 
  year={2022},
  pages={2911-2916},
  doi={10.1109/CDC51059.2022.9993055}
}

@article{Darup2020,
  author = {Darup, M. S.},
  title = {Encrypted polynomial control based on tailored two-party computation},
  journal = {International Journal of Robust and Nonlinear Control},
  volume = {30},
  number = {11},
  pages = {4168-4187},
  doi = {https://doi.org/10.1002/rnc.5003},
  year = {2020}
}

@ARTICLE{Feng2023,
  author={Feng, Zhaowen and Cao, Guoyan and Grigoriadis, Karolos M. and Pan, Quan},
  journal={IEEE Transactions on Industrial Informatics}, 
  title={Secure {MPC}-Based Path Following for UAS in Adverse Network Environment}, 
  year={2023},
  volume={19},
  number={11},
  pages={11091-11101},
  doi={10.1109/TII.2022.3232772}
}

@ARTICLE{Huo2023,
  author={Huo, Xiang and Liu, Mingxi},
  journal={IEEE Transactions on Industrial Informatics}, 
  title={Encrypted Decentralized Multi-Agent Optimization for Privacy Preservation in Cyber-Physical Systems}, 
  year={2023},
  volume={19},
  number={1},
  pages={750-761},
  doi={10.1109/TII.2021.3132940}
}

@ARTICLE{Marcantoni2022,
  author={Marcantoni, Matteo and Jayawardhana, Bayu and Chaher, Mariano Perez and Bunte, Kerstin},
  journal={IEEE Control Systems Letters}, 
  title={Secure Formation Control via Edge Computing Enabled by Fully Homomorphic Encryption and Mixed Uniform-Logarithmic Quantization}, 
  year={2023},
  volume={7},
  pages={395-400},
  doi={10.1109/LCSYS.2022.3188944}
}

@inproceedings{Paillier1999,
  author = {Paillier, P.},
  title = {Public-key cryptosystems based on composite degree residuosity classes},
  booktitle = {ADVANCES IN CRYPTOLOGY - EUROCRYPT'99},
  series = {Lecture Notes in Computer Science},
  year = {1999},
  volume = {1592},
  pages = {223-238}
}

@article{SB23,
  author = {N. Schl\"{u}ter and P. Binfet and M. {Schulze Darup}},
  title = {A brief survey on encrypted control: From the first to the second generation and beyond},
  journal = {Annual Reviews in Control},
  volume = {56},
  pages = {100913},
  year = {2023},
  issn = {1367-5788},
  doi = {https://doi.org/10.1016/j.arcontrol.2023.100913}
}

@article{Darup2018,
  author = {Darup, M. S. and Redder, A. and Shames, I. and Farokhi, F. and Quevedo, D.},
  journal = {IEEE Control Systems Letters}, 
  title = {Towards Encrypted {MPC} for Linear Constrained Systems}, 
  year = {2018},
  volume = {2},
  number = {2},
  pages = {195-200},
  doi = {10.1109/LCSYS.2017.2779473}
}

@inproceedings{Alexandru2018,
	author = {Alexandru, A. B. and Morari, M. and Pappas, G. J.},
	booktitle = {2018 IEEE Conference on Decision and Control (CDC)}, 
	title = {Cloud-Based {MPC} with Encrypted Data}, 
	year = {2018},
	volume = {},
	number = {},
	pages = {5014-5019},
	doi = {10.1109/CDC.2018.8619835}
}

@ARTICLE{HS22,	
	author={Hu, Zhongrui and Shi, Peng and Wu, Ligang},	
	journal={IEEE Transactions on Circuits and Systems II: Express Briefs}, 	
	title={Preserving State and Control Privacies in Networked Systems With Tokenized Polytopic Transforms}, 	
	year={2022},	
	volume={69},	
	number={1},	
	pages={104-108},	
	doi={10.1109/TCSII.2021.3075471}}

@INPROCEEDINGS{SD20,	
	author={Schlüter, Nils and Darup, Moritz Schulze},	
	booktitle={2020 59th IEEE Conference on Decision and Control (CDC)}, 	
	title={Encrypted explicit {MPC} based on two-party computation and convex controller decomposition}, 	
	year={2020},	
	volume={},	
	number={},	
	pages={5469-5476},	
	doi={10.1109/CDC42340.2020.9304078}}

@article{BGV2014,
	author = {Brakerski, Z. and Gentry, C. and Vaikuntanathan, V.},
	title = {(Leveled) Fully Homomorphic Encryption without Bootstrapping},
	year = {2014},
	publisher = {Association for Computing Machinery},
	volume = {6},
	number = {3},
	issn = {1942-3454},
	doi = {10.1145/2633600},
	journal = {ACM Trans. Comput. Theory}
	}

@book{borrelli_bemporad_morari_2017, 
	place={Cambridge}, 
	title={Predictive Control for Linear and Hybrid Systems}, 
	DOI={10.1017/9781139061759}, 
	publisher={Cambridge University Press}, 
	author={Borrelli, F. and Bemporad, A. and Morari, M.}, 
	year={2017}
}

@article{chen:2020:ijrnc,
author = {Chen, Jiming and Gupta, Vijay and Quevedo, Daniel E. and Tesi, Pietro},
title = {Privacy and security of cyberphysical systems},
journal = {International Journal of Robust and Nonlinear Control},
volume = {30},
number = {11},
pages = {4165-4167},
doi = {https://doi.org/10.1002/rnc.5051},
url = {https://onlinelibrary.wiley.com/doi/abs/10.1002/rnc.5051},
eprint = {https://onlinelibrary.wiley.com/doi/pdf/10.1002/rnc.5051},
year = {2020}
}

@article{lin:2022:ijrnc,
year = {2025},
author = {Lin, Wenhao and Ni, Yuqing and Ren, Xiaoqiang and Yang, Wen and Yang, Chao},
title = {Privacy Preservation by Public Design in Cloud-Based Cooperative LQG Control Systems},
journal = {International Journal of Robust and Nonlinear Control},
volume = {n/a},
number = {n/a},
pages = {},
doi = {https://doi.org/10.1002/rnc.7778},
url = {https://onlinelibrary.wiley.com/doi/abs/10.1002/rnc.7778},
eprint = {https://onlinelibrary.wiley.com/doi/pdf/10.1002/rnc.7778}
}

@InProceedings{yalmip:paper,
  author       = {J. L{\"o}fberg},
  title	       = {{YALMIP : A Toolbox for Modeling and Optimization in
                  MATLAB}},
  booktitle    = {Proc.~of the {CACSD} Conference},
  year	       = {2004},
  address      = {Taipei, Taiwan},
  note	       = {}
}

@InProceedings{RLWE,
	author={Lyubashevsky, V. and Peikert, C. and Regev, O.},
	title={On Ideal Lattices and Learning with Errors over Rings},
	booktitle={Advances in Cryptology -- EUROCRYPT 2010},
	year={2010},
	publisher={Springer Berlin Heidelberg},
	address={Berlin, Heidelberg},
	pages={1--23},
	isbn={978-3-642-13190-5}
	}

@article{MR00,
	author = {D.Q. Mayne and J.B. Rawlings and C.V. Rao and P.O.M. Scokaert},
	title = {Constrained model predictive control: Stability and optimality},
	journal = {Automatica},
	volume = {36},
	number = {6},
	pages = {789-814},
	year = {2000},
	issn = {0005-1098},
	doi = {https://doi.org/10.1016/S0005-1098(99)00214-9}
}

@article{uiam1171,
	author	=	{M. Kvasnica and J. L\"ofberg and M. Fikar},
	title	=	{Stabilizing polynomial approximation of explicit {MPC}},
	journal	=	{Automatica},
	year	=	{2011},
	volume	=	{47},
	number	=	{10},
	pages	=	{2292-2297},
	doi	=	{10.1016/j.automatica.2011.08.023},
	url	=	{https://www.uiam.sk/assets/publication_info.php?id_pub=1171}
}

@ARTICLE{4026638,
	author={Baoti, M. and Christophersen, F. J. and Morari, M.},
	journal={IEEE Transactions on Automatic Control}, 
	title={Constrained Optimal Control of Hybrid Systems With a Linear Performance Index}, 
	year={2006},
	volume={51},
	number={12},
	pages={1903-1919},
	doi={10.1109/TAC.2006.886486}}

@Article{MayEtal:aut:00,
  author       = {D. Q. Mayne and J. B. Rawlings and C. V. Rao and
                  P. O. M. Scokaert},
  title	       = {Constrained model predictive control: Stability and
                  optimality},
  journal      = {Automatica},
  year	       = {2000},
  volume       = {36},
  number       = {6},
  pages	       = {789--814},
  month	       = jun,
  type	       = {Survey Paper},
}

@article{MS05,
	author = {D.Q. Mayne and M.M. Seron and S.V. Raković},
	title = {Robust model predictive control of constrained linear systems with bounded disturbances},
	journal = {Automatica},
	volume = {41},
	number = {2},
	pages = {219-224},
	year = {2005},
	issn = {0005-1098},
	doi = {https://doi.org/10.1016/j.automatica.2004.08.019}
}

@book{christophersen2007optimal,
	title={Optimal Control of Constrained Piecewise Affine Systems},
	author={Christophersen, F.},
	isbn={9783540727019},
	series={Lecture Notes in Control and Information Sciences},
	year={2007},
	publisher={Springer Verlag}
}

@article{MOK2008524,
	title = {Effective Pólya semi-positivity for non-negative polynomials on the simplex},
	journal = {Journal of Complexity},
	volume = {24},
	number = {4},
	pages = {524-544},
	year = {2008},
	issn = {0885-064X},
	doi = {https://doi.org/10.1016/j.jco.2008.01.003},
	url = {https://www.sciencedirect.com/science/article/pii/S0885064X0800006X},
	author = {H. N. Mok and W. K. To}
}

@inproceedings{kaluz:2019:flexy,
author	=	{M. Kal\'uz and M. Klau\v{c}o and {\v{L}}. {\v{C}}irka and M. Fikar},
title	=	{Flexy2: A Portable Laboratory Device for Control Engineering Education},
booktitle	=	{12th IFAC Symposium Advances in Control Education},
year	=	{2019},
pages	=	{159-164},
month	=	{July 7-9},
url	=	{https://www.uiam.sk/assets/publication_info.php?id_pub=2068}
}

@inproceedings{Flexy2,
author	=	{M. Kal\'uz and M. Klau\v{c}o and {\v{L}}. {\v{C}}irka and M. Fikar},
title	=	{Flexy2: A Portable Laboratory Device for Control Engineering Education},
booktitle	=	{12th IFAC Symposium Advances in Control Education},
year	=	{2019},
pages	=	{159-164},
month	=	{July 7-9},
url	=	{https://www.uiam.sk/assets/publication_info.php?id_pub=2068}
}

@misc{tenseal2021,
    title={TenSEAL: A Library for Encrypted Tensor Operations Using Homomorphic Encryption}, 
    author={A. Benaissa and B. Retiat and B. Cebere and A. E. Belfedhal},
    year={2021},
    eprint={2104.03152},
    archivePrefix={arXiv},
    primaryClass={cs.CR}
}

@misc{sealcrypto,
        title = {{M}icrosoft {SEAL} (release 4.0)},
        howpublished = {\url{https://github.com/Microsoft/SEAL}},
        month = mar,
        year = 2022,
        note = {Microsoft Research, Redmond, WA.},
        key = {SEAL}
    }

@InProceedings{kim2021,
author="Kim, A.
and Polyakov, Y.
and Zucca, V.",
title="Revisiting Homomorphic Encryption Schemes for Finite Fields",
booktitle="Advances in Cryptology -- ASIACRYPT 2021",
year="2021",
publisher="Springer International Publishing",
address="Cham",
pages="608--639",
isbn="978-3-030-92078-4"
}

@ARTICLE{MoritzDarup_2025_polyEcryption,
	author={Teichrib, Dieter and Adamek, Janis and Binfet, Philipp and Darup, Moritz Schulze},
	journal={IEEE Access}, 
	title={Polynomial Function Approximations With Leading Integer Coefficients for Efficient Encrypted Implementations}, 
	year={2025},
	volume={13},
	number={},
	pages={157455-157462},
	doi={10.1109/ACCESS.2025.3606013}}

@article{BFV_stateControl2024,
	title = {Optimizing encrypted control algorithms for real-time secure control},
	journal = {Journal of the Franklin Institute},
	volume = {361},
	number = {5},
	pages = {106677},
	year = {2024},
	issn = {0016-0032},
	doi = {https://doi.org/10.1016/j.jfranklin.2024.106677},
	url = {https://www.sciencedirect.com/science/article/pii/S001600322400098X},
	author = {Tongtong Sui and Jizhi Wang and Wen Liu and Jingshan Pan and Lizhen Wang and Yue Zhao and Lingrui Kong}
}

@ARTICLE{NonlinearControl_Survey_2026,
	author={Bian, Song and Jin, Yuexiang and Zhao, Dong and Fu, Yunhao and Pan, Haowen and Chen, Yi and Zhang, Bo and Ren, Changrui and Yin, Peng and Dong, Jin and Guan, Zhenyu},
	journal={IEEE Transactions on Information Forensics and Security}, 
	title={ENClose: Encrypted Nonlinear Closed-Loop Control Over Fully Homomorphic Encryption}, 
	year={2026},
	volume={21},
	number={},
	pages={3928-3943},
	doi={10.1109/TIFS.2026.3683287}}

@ARTICLE{EncryptedRobustMPC2026,
	author={Peng, Kai-Yu and Xie, Wei and Zhang, Langwen and Mo, Wen-jun},
	journal={IEEE Transactions on Control of Network Systems}, 
	title={Homomorphically Encrypted Robust MPC for Privacy-Preserving Control Under System Uncertainties}, 
	year={2026},
	volume={13},
	number={2},
	pages={923-934},
	doi={10.1109/TCNS.2026.3667762}}

@article{EncryptedDistributedMPC2024,
	title = {Encrypted distributed model predictive control of nonlinear processes},
	journal = {Control Engineering Practice},
	volume = {145},
	pages = {105874},
	year = {2024},
	issn = {0967-0661},
	doi = {https://doi.org/10.1016/j.conengprac.2024.105874},
	url = {https://www.sciencedirect.com/science/article/pii/S0967066124000340},
	author = {Yash A. Kadakia and Fahim Abdullah and Aisha Alnajdi and Panagiotis D. Christofides}
}

\end{document}